\DeclareMathAlphabet{\mathpzc}{OT1}{pzc}{m}{it}
\newcommand{\R}{\mathbb{R}}
\theoremstyle{definition}
\newtheorem{definition}{Definition}
\newtheorem{lemma}{Lemma}
\newtheorem{theorem}{Theorem}
\newtheorem{remark}{Remark}
\newtheorem{assumption}{Assumption}
\newcommand{\norm}[1]{\left\lVert #1\right\rVert}
\newcommand{\set}[1]{\left\{ #1\right\}}
\DeclareMathOperator*{\argmin}{arg\,min}
\newcommand{\doi}[1]{\href{http://dx.doi.org/#1}{\normalsize{\textsc{doi:}}~\nolinkurl{#1}}}
\newcommand{\arxiv}[1]{\href{http://arxiv.org/abs/#1}{\normalsize{\textsc{arxiv:}}~\nolinkurl{#1}}}
\newcommand{\HRule}{\noindent\rule{\linewidth}{0.1mm}\newline}
\renewcommand{\epsilon}{\varepsilon}
\renewcommand{\phi}{\varphi}
\newcommand*\sq{\mathbin{\vcenter{\hbox{\rule{.4ex}{.4ex}}}}}
\title{\LARGE \bf
Gaussian Process-based Min-norm Stabilizing Controller for Control-Affine Systems with Uncertain Input Effects and Dynamics}
\author{Fernando Castañeda*, Jason J. Choi*, Bike Zhang, Claire J. Tomlin and Koushil Sreenath
\thanks{*Indicates equal contribution.}
\thanks{Fernando Castañeda, Jason J. Choi, Bike Zhang, Claire J. Tomlin and Koushil Sreenath are with the University of California, Berkeley, CA, 94720, USA, \tt\small \{fcastaneda, jason.choi, bikezhang, tomlin, koushils\}@berkeley.edu}
\thanks{This work was partially supported through National Science Foundation Grant CMMI-1931853 and DARPA Assured Autonomy program, grant FA8750-18-C-0101. The work of Fernando Casta\~neda received the support of a fellowship from Fundaci\'on Rafael del Pino, Spain. The work of Jason Choi received the support of a fellowship from Kwanjeong Educational Foundation, Korea.}
}
\begin{document}

\maketitle
\thispagestyle{empty}
\pagestyle{empty}

\begin{abstract}
This paper presents a method to design a min-norm Control Lyapunov Function (CLF)-based stabilizing controller for a control-affine system with uncertain dynamics using Gaussian Process (GP) regression. In order to estimate both state and input-dependent model uncertainty, we propose a novel compound kernel that captures the control-affine nature of the problem. Furthermore, by the use of GP Upper Confidence Bound analysis, we provide probabilistic bounds of the regression error, leading to the formulation of a CLF-based stability chance constraint which can be incorporated in a min-norm optimization problem. We show that this resulting optimization problem is convex, and we call it ``Gaussian Process-based Control Lyapunov Function Second-Order Cone Program'' (GP-CLF-SOCP). The data-collection process and the training of the GP regression model are carried out in an episodic learning fashion. We validate the proposed algorithm and controller in numerical simulations of an inverted pendulum and a kinematic bicycle model, resulting in stable trajectories which are very similar to the ones obtained if we actually knew the true plant dynamics.
\end{abstract}

\section{Introduction}
\label{sec:01introduction}

Model-based controllers have a problem inherent to their nature: model uncertainty. 
In this paper, we directly address this issue for the case of Lyapunov-based stabilizing controllers for nonlinear control-affine systems by using Gaussian Process (GP) regression to estimate the adverse effects of model uncertainty.

Control Lyapunov Functions (CLFs) \cite{artstein,sontag} have been widely used in recent years for nonlinear model-based stabilizing control of robotic systems \cite{galloway2015clfqp, nguyen2015optimal,reher2019inverseCLF}. Typically, the robot is stabilized by enforcing the CLF to decay to zero with a constraint in an optimization problem \cite{ames2013clfqp}.
However, CLF-based optimization methods heavily rely on the assumption that the model used for the controller design accurately represents the true plant's dynamics. If there is model-plant mismatch, convergence guarantees are often lost. Past research has directly addressed this issue by using both robust \cite{nguyen2015optimal} and adaptive \cite{nguyen2015l1adaptive} control theory.
More recently, various kinds of data-driven methods that use neural networks have been introduced \cite{taylor2019clflearning,choi2020reinforcement,westenbroek2020learning}.
Although these are demonstrated to be effective in practice, it is often difficult to verify the reliability of the neural network predictions.

For this paper, we are more interested in another class of data-driven approaches to tackle this problem, which use GP regression to allow for the analysis of the confidence of the prediction. The method of applying GPs to the CLF constraint was first introduced for closed-loop systems in \cite{berkenkamp2016lyapunov}. Then, similar approaches have also been proposed for the construction of stability and safety constraints to be incorporated in min-norm controllers \cite{umlauft2018clf, fan2019balsa, cheng2020safe, zheng2020learning}.

However, all of these papers make an important assumption that might restrict their applicability, which is that the considered model uncertainty is unaffected by the control input. In contrast, for many controlled systems, uncertain input effects\footnote{Uncertainty in the control vector field $g(x)$ in \eqref{eq:system}.} are prevalent, e.g., in a mechanical system, uncertainty in the inertia matrix directly induces uncertain input effects. In the work presented in \cite{khojasteh2020probabilistic}, a similar problem is addressed for the case of Control Barrier Function-based safety constraints \cite{amescbf} by the use of a Matrix-Variate GP regression. However, it does not provide a regression confidence analysis and results in an optimization problem that is not always convex. Finally, all the aforementioned GP-based approaches apply GP regression directly to the dynamics vector fields, which scale poorly with the system dimension.

In this paper, we develop solutions to overcome the presented limitations of the previous GP-based methods. First, we provide a formal way to deal with input-dependent model uncertainty of control-affine systems by proposing a specific GP kernel structure suitable for this problem. Since we apply GP regression to a scalar uncertainty term in the CLF constraint directly, compared to learning the uncertainty terms in the dynamics, we can reduce the computation of the regression significantly while still capturing many realistic forms of uncertainty. A similar kernel structure was used in \cite{umlauft2017feedback} to learn the uncertainty terms in the autonomous and control vector fields separately for a single-input system. Here, we generalize the kernel to an arbitrary input dimension and derive expressions for the posterior GP of a combined input-dependent uncertainty term whose mean and variance are linear and quadratic in the input, respectively.
By doing so, we can formulate a Second-Order Cone Program (SOCP) which incorporates a chance constraint that takes into account the confidence of the GP model and provides the exponential stabilizability of the system. We call it \emph{Gaussian Process-based Control Lyapunov Function Second-Order Cone Program} (GP-CLF-SOCP).
Formulation of the SOCP is crucial in that it can be solved quickly enough for real-time applications due to its convexity.
Finally, since the inference time of GP regression is directly determined by the size of the training data,
we maximize data efficiency by the use of an algorithm that iteratively collects data and improves the GP regression model in an episodic learning fashion.

The rest of the paper is organized as follows. In Section \ref{sec:02problemstatement}, we give a brief overview of CLF-based controllers and show the effects that model uncertainty has on them. In Section \ref{sec:03gp}, we explain the basic concepts of GP regression. In Section \ref{sec:04compoundkernel}, we present the compound kernel structure that allows us to regress the control-affine uncertainty. In Section \ref{sec:05optimization}, we present the SOCP formulation of our uncertainty-aware optimization problem. In Section \ref{sec:06algorithm}, we propose an efficient data-collection procedure. In Section \ref{sec:07results} we validate the proposed method for two different systems. Finally, in Section \ref{sec:08conclusion}, we provide concluding remarks.

\section{Problem Statement}
\label{sec:02problemstatement}

Throughout this paper we will consider nonlinear control-affine systems of the form
\vspace{-3pt}
\begin{equation}\label{eq:system}
    \dot{x} = f(x) + g(x)u,
    \vspace{-3pt}
\end{equation}
where $x \in \R^n$ is the state of the system and $u \in \mathbb{R}^m$ is the control input. The vector fields $f \colon \R^n \to \R^n$ and $g \colon \R^n \to \R^{n \times m}$ are assumed to be locally Lipschitz continuous and $f(0)=0$.

The main objective of this paper is the construction of a locally stabilizing controller for such a system even when its dynamics are uncertain. A system is called \textit{stabilizable} when it is asymptotically controllable to the origin with a feedback control law $u \colon \R^n \to \R^m$ that is continuous except possibly at the origin.

\subsection{Control Lyapunov Functions}
\label{subsec:0201clf}

\begin{definition}\vspace{0.3cm}
\label{def:clf}
Let $V \colon \R^n \to \R_{+}$ be a positive definite, continuously differentiable and radially unbounded function. We say that $V$ is a \emph{Control Lyapunov Function} (CLF) for system \eqref{eq:system} if for each $x \in \mathbb{R}^n \setminus \set{0}$
\begin{equation}\label{eq:clf_def}
    \inf_{u \in \mathbb{R}^m} \underbrace{L_f V(x) + L_g V(x)u}_{= \dot{V}(x,u)} < 0,
\vspace{-3pt}
\end{equation}
where the functions $L_f V(x) \coloneqq \nabla V(x) \cdot f(x)$ and $L_g V(x) \coloneqq \nabla V(x) \cdot g(x)$ are known as Lie derivatives.
\end{definition}
If such a CLF exists, the system is known to be globally stabilizable \cite{artstein}. Then, it is desirable to find a locally Lipschitz continuous feedback control law $u \colon \R^n \to \R^m$ such that the condition $L_f V(x) + L_g V(x)u(x) < 0$ holds for any $x \in \R^n \setminus \set{0}$. A simple way of synthesizing such a control law is by enforcing \eqref{eq:clf_def} as a constraint in a min-norm optimization problem. 
If $u$ is unconstrained, this min-norm stabilizing controller can be expressed in closed-form \cite{sontag}.

However, many real-world systems require the addition of input constraints due to actuator limitations, i.e., $u \in U \subset \R^m$, in which case condition \eqref{eq:clf_def} becomes $\inf_{u \in U} L_f V(x) + L_g V(x)u < 0$,
which might not be satisfied at every $x \in \mathbb{R}^n \setminus \set{0}$ even if $V$ is a valid CLF for system \eqref{eq:system}. This fact motivates the following lemma.

\begin{lemma}\vspace{0.3cm}
\label{th:clf}
Let $V \colon \R^n \to \R_{+}$ be a CLF for system \eqref{eq:system} and let $U \subset \R^m$ be the compact set of admissible control inputs. For each $c \in \R_{+}$ let $\Omega_c$ be the sublevel set of $V$ such that $\Omega_c \coloneqq \set{x \in \R^n \colon V(x) \leq c}$. If there exists a $c_i > 0$ such that
\vspace{-5pt}
\begin{equation}
\label{eq:clf_local_def}
    \inf_{u \in U} L_f V(x) + L_g V(x)u < 0
    \vspace{-2pt}
\end{equation}
is satisfied $\forall x \in \Omega_{c_i} \setminus \set{0}$, then the system is locally stabilizable and $\Omega_{c_i} $ is a control invariant subset of the Region of Attraction (RoA) of the origin.
\end{lemma}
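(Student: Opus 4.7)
The plan is to construct a feedback law $u^\star: \Omega_{c_i} \to U$, continuous on $\Omega_{c_i} \setminus \{0\}$, under which $\dot V$ is strictly negative on $\Omega_{c_i} \setminus \{0\}$, and then invoke standard Lyapunov reasoning to obtain forward invariance of $\Omega_{c_i}$ together with asymptotic convergence to the origin.

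First I would exploit the hypothesis pointwise: for each $x \in \Omega_{c_i} \setminus \{0\}$, compactness of $U$ and continuity of $L_f V$ and $L_g V$ turn the infimum in \eqref{eq:clf_local_def} into a strictly negative minimum, so at least one $u \in U$ satisfies $L_f V(x) + L_g V(x) u < 0$. To upgrade this pointwise existence into a continuous selection, I would apply Artstein's theorem, or equivalently invoke Michael's selection theorem on the set-valued map $x \mapsto \set{u \in U : L_f V(x) + L_g V(x) u \leq -\epsilon(V(x))}$ for a suitably chosen class-$\mathcal{K}$ function $\epsilon$; its values are nonempty, closed and convex, and establishing lower hemicontinuity on $\Omega_{c_i} \setminus \{0\}$ then yields the desired continuous $u^\star$, which I extend by $u^\star(0) = 0$ at the origin.

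Next I would use $u^\star$ to establish invariance and attraction. Because $V(0) = 0 < c_i$, the level set $\partial \Omega_{c_i} = \set{x \in \R^n : V(x) = c_i}$ lies in $\Omega_{c_i} \setminus \{0\}$, where $\dot V(x, u^\star(x)) < 0$; hence no trajectory crosses $\partial \Omega_{c_i}$ outward and $\Omega_{c_i}$ is forward control invariant. Radial unboundedness of $V$ makes $\Omega_{c_i}$ compact, so closed-loop trajectories exist for all $t \geq 0$; on any annulus $\set{x \in \Omega_{c_i} : V(x) \geq \eta}$ the continuous map $\dot V$ attains a strictly negative maximum, forcing $V(x(t))$ to decrease at a uniformly bounded rate until it enters every neighborhood of the origin. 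Combined with the monotonicity of $V(x(t))$, this gives $x(t) \to 0$, so $\Omega_{c_i}$ is contained in the RoA of the origin.

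The hardest step will be producing a $u^\star$ that is continuous on $\Omega_{c_i} \setminus \{0\}$: the pointwise inequality $\inf_{u \in U} \dot V < 0$ does not, by itself, supply a continuous selector, and ruling out pathological discontinuities at nontrivial $x$ is genuinely a topological matter. That is precisely what Artstein's theorem (or the Michael-selection argument exploiting the convex-valued structure of the admissible-$u$ correspondence) is designed to handle, after which the Lyapunov invariance and decay arguments above are routine.
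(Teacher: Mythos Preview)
Your approach is essentially the same as the paper's: the paper defers the stabilizability claim to \cite[Proposition~2.2]{Lin95control-lyapunovuniversal}, which is precisely the Artstein/Sontag-type construction of a continuous stabilizing feedback under input constraints that you propose to invoke (via Artstein's theorem or a Michael-selection argument), and then gives the same one-line boundary argument---$\dot V(x,u)<0$ for some $u\in U$ on $\partial\Omega_{c_i}$---for control invariance. Your plan simply unpacks what the cited reference does and spells out the standard Lyapunov invariance/attraction reasoning that the paper leaves implicit.
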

\begin{proof}
See \cite[Proposition~2.2]{Lin95control-lyapunovuniversal} for the proof of stabilizability. The control invariance proof is straightforward since for any $x$ at the boundary of $\Omega_{c} $ we can always find a $u\in U$ such that $\dot{V}(x,u)<0$ from condition \eqref{eq:clf_local_def}.
\end{proof}

We can now take $c_{max}$ as the maximum value of $c_i \in \R_{+}$ such that \eqref{eq:clf_local_def} holds for any $x \in \Omega_{c_i}  \setminus \set{0}$. Then, $\Omega_{c_{max}} $ is the largest sublevel set of $V$ contained in the RoA.

We can also consider a stronger notion of stabilizability by imposing exponential convergence to the origin. 
It is well-known that if there exists a compact subset $D \subseteq \Omega_{c_{max}} $ such that $\forall x \in D$ the following holds for some constant $\lambda>0$,
\begin{equation}
\label{eq:expclf_local_def}
    \inf_{u \in U} L_f V(x) + L_g V(x)u + \lambda V(x) \leq 0,
\end{equation}
 then the state of the system can be driven exponentially fast to the origin from any initial state $x_0 \in \Omega_{c_{exp}} \subseteq D$ \cite{ames2014rapidly}. If such $c_{exp}>0$ exists, we will say that $V$ is a \textit{locally exponentially stabilizing} CLF. The condition
\eqref{eq:expclf_local_def} can be incorporated as a constraint into a min-norm optimization problem:
\HRule
\noindent \textbf{CLF-QP}:
\begin{subequations}
\label{eq:clf-qp-all}
\begin{align}
u^{*}(x) & = & & \underset{u\in U}{\argmin} \quad u^T u \label{eq:clf-qp}\\
\text{s.t.} & \; & & L_f V(x) + L_g V(x)u + \lambda V(x) \leq 0. \label{eq:eclf}
\end{align}
\end{subequations}
\HRule
In this paper, we assume that the input constraints are linear, which makes problem \eqref{eq:clf-qp-all} a quadratic program (QP). We will refer to constraint \eqref{eq:eclf} as the \textit{exponential CLF constraint}. This optimization problem defines a feedback control law $u^*\!\colon\!\R^n\!\to\!\R^m$ selecting the min-norm input such that the system state converges to the origin exponentially quickly. 
Note that, in practice, constraint \eqref{eq:eclf} is typically relaxed by adding a slack variable in order to guarantee the feasibility of the problem if condition \eqref{eq:expclf_local_def} is not locally satisfied \cite{galloway2015clfqp}.

\subsection{Effects of Model Uncertainty on CLF-based Controllers}
\label{subsec:02adverse-effects}

The main problem concerned in this paper is how to reformulate the min-norm stabilizing controller defined in \eqref{eq:clf-qp-all} in the presence of model uncertainty. 

First, we provide some necessary settings and assumptions for our problem formulation.
Let's assume that we have a \textit{nominal model}
\begin{equation}
    \label{eq:model}
    \dot{x} = \tilde{f}(x) + \tilde{g}(x)u,
\end{equation}
where $\tilde{f}: \R^n \rightarrow \R^n$, $\tilde{g}: \R^n \rightarrow \R^{n\times m}$ are Lipschitz continuous vector fields and $\tilde{f}(0)=0$. We assume that we have a locally exponentially stabilizing CLF $V$ for the nominal model, and that the plant is also locally exponentially stabilizable with the same $V$. Note that the region of exponential stabilizablity around the origin can be sufficiently small. Also, the assumption can be relaxed to asymptotic stabilizability if the user is concerned with enforcing condition \eqref{eq:clf_local_def} instead of \eqref{eq:expclf_local_def}. In general, $\tilde{f}$ and $\tilde{g}$ would be different from the true plant vector fields $f$ and $g$ because the nominal model is imperfect. The assumption implies, however, that they share some similarity through the stabilizing property of the same function $V$. Finally, we also assume that we have access to measurements of state and control input at every sampling time $\Delta t$.

Our main objective is to construct the exponential CLF constraint \eqref{eq:eclf} for the true plant when we only know the model dynamics $\tilde{f}$ and $\tilde{g}$. Since $\dot{V}(x,u)=L_f V(x) + L_g V(x)u$ depends on the dynamics of the plant, the estimate based on the nominal model $\tilde{\dot{V}}(x, u)=L_{\tilde{f}}V(x) + L_{\tilde{g}}V(x) u$, will differ from its true value. We define $\Delta: \R^n \times \R^m \rightarrow \R$ as the difference between these:
\vspace{-3pt}
\begin{equation}
    \label{eq:mismatchdef}
    \Delta(x, u) := \dot{V}(x, u) - \tilde{\dot{V}}(x, u).
    \vspace{-2pt}
\end{equation}

\noindent Then, the exponential CLF constraint \eqref{eq:eclf} becomes
\vspace{-2pt}
\begin{equation}
    \label{eq:clf-constraint-uncertainty}
    L_{\tilde{f}}V(x) + L_{\tilde{g}}V(x) u + \Delta(x, u) + \lambda V(x)\le 0.
\vspace{-2pt}
\end{equation}
Therefore, verifying the exponential CLF constraint for the true plant amounts to a problem of learning the mismatch term $\Delta(x, u)$ correctly and then enforcing \eqref{eq:clf-constraint-uncertainty}. 
We can learn this function from the past data by formulating a supervised learning problem. Specifically, we will use GP regression, a method that will be introduced in the next section.

\begin{remark}
In \eqref{eq:mismatchdef}, if we express $\dot{V}$ and $\tilde{\dot{V}}$ with their respective Lie derivatives, we get
\vspace{-3pt}
\begin{equation}
\label{eq:mismatchaffine}
    \Delta(x, u) = \underbrace{(L_f V(x) -L_{\tilde{f}}V(x))}_{=:\Delta_1(x)} + \underbrace{(L_g V(x) - L_{\tilde{g}}V(x) )}_{=:\Delta_2(x)}u.
    \vspace{-3pt}
\end{equation}
Note that we do not have access to $\Delta_1(x)$ and $\Delta_2(x)$ in this equation since we are unaware of $f$ and $g$. It is tempting to learn each of these terms separately with supervised learning. However, we can only measure $\Delta(x, u)$, which makes this approach intractable. Nevertheless, we can exploit the fact that the mismatch term $\Delta(x, u)$ is control-affine.
\end{remark}

\section{Gaussian Process Regression}

\label{sec:03gp}
A Gaussian Process is a random process such that any finite selection of samples $\{h(x_k)\}_{k=1}^{n}$ has a joint Gaussian distribution. A GP is fully determined by its mean function $q: \mathcal{X}\rightarrow\R$ and covariance function $k: \mathcal{X}\times\mathcal{X}\rightarrow\R$, i.e.,
\begin{equation}
    h(x) \sim \mathcal{GP}(q(x), k(x,x')),
\end{equation}
where $\mathcal{X}$ is the input domain, a connected subset of $\R^n$, $h(x)$ is the output function sampled from the GP, and $x, x'$ denote input variables in $\mathcal{X}$. Any positive definite kernel function\footnote{ $k$ is a positive definite kernel if its associated kernel matrix 
$K(x_1, x_2)$, whose ($i^{th}$, $j^{th}$) element is defined as $k(x_i, x_j)$, is positive semi-definite for any distinct points $x_1, x_2 \in \mathcal{X}$.
} can be a valid covariance function \cite{wendland2004scattered}.
Such a kernel $k(x, x')$ can be used to generate a set of functions that satisfy a specific property, namely a ``reproducing" property. An inner product between such a function $h$ and the kernel $k(\cdot, x)$ should reproduce $h$, i.e., $\langle h(\cdot), k(\cdot, x) \rangle = h(x), ~\forall x\in \mathcal{X}$. Such a set is called a Reproducing Kernel Hilbert Space (RKHS, \cite{wendland2004scattered}), a specific class of function space, and is denoted as $\mathcal{H}_k(\mathcal{X})$. The RKHS norm $\norm{h}_{k}:=\sqrt{\langle h, h \rangle}$, which will be used in Lemma \ref{lemma:UCB}, is a measure of the smoothness of $h$ with respect to the kernel function\footnote{$\norm{h(x)-h(x')}_{2}\le\norm{h}_{k}\norm{k(x,\cdot)-k(x',\cdot)}_{k} \; \forall x, x' \in \mathcal{X}$}. Note that an appropriate inner product in the above expressions would be determined by the specific choice of the associated reproducing kernel $k$.

GPs encode prior distributions over functions and given new query points, a posterior distribution can be derived from the joint Gaussian distribution between the prior data and the query points. This gives rise to their typical application in the machine learning literature: GP regression. For the remainder of the paper, we use $q(x)\!\equiv\!0$ as the mean function of the prior GP.
Given $N$ input-output data pairs $\{(x_j,z_j)\}_{j=1}^{N}$, the regressor is provided by the posterior GP distribution conditioned on the data. Here, the output data is assumed to be measurements of $h(x_j)$, where $h$ is the target function for regression, with additive white noise, i.e., $z_j = h(x_j)+\epsilon_j$, where $\epsilon_j \sim \mathcal{N}(0, \sigma_{n}^2)$. Then, the mean and the variance of the posterior $h(x_{*})$ at a query point $x_*$, are given as 
\begin{equation}
\label{eq:gpposteriormu}
        \mu_{*} = \mathbf{z}^T (K + \sigma_n^2 I )^{-1} K_{*}^{T},
\end{equation}
\begin{equation}
\label{eq:gpposteriorsigma}
    \sigma_{*}^{2} = k\left(x_{*}, x_{*}\right)-K_{*}  (K + \sigma_n^2 I )^{-1} K_{*}^{T},
\end{equation}
which are derived from the distribution of $h(x_{*})$ conditioned on $\{z_j\}_{j=1}^{N}$ \cite{williams2006gaussian}.
$K\in\R^{N\times N}$ is the Gram matrix whose ($i^{th}$, $j^{th}$) element is defined as $k(x_i, x_j)$ for $i, j=1,\cdots,N$,
and $K_{*}=[k(x_*, x_1),\ \cdots \ ,k(x_*, x_N)]\in\R^{1\times N}$. $\mathbf{z}\in \R^N$ is the vector containing the output measurements $z_j$. 
Note that there exist various choices of kernel functions and many of them depend on some hyperparameters which determine the kernel's characteristics. Depending on the choice of kernel and hyperparameters, the result of the regression varies, and the problem of choosing the best kernel and its hyperparameters is known as the ``training" process of the GP regression \cite{williams2006gaussian}. In this work we use marginal likelihood maximization, which is one of the most common training methods.

After training, one would like to study how close the GP model approximates the target function. In order to do this, we use the Upper Confidence Bound (UCB) analysis \cite{gpucb}, specifically, the following lemma.

\begin{lemma}\cite[Thm.~6]{gpucb} \label{lemma:UCB} Assume that the noise sequence $\{\epsilon_j\}_{j=1}^{\infty}$ is zero-mean and uniformly bounded by $\sigma_{n}$.
Let the target function $h: \mathcal{X} \rightarrow \mathbb{R}$ be a member of $\mathcal{H}_{k}(\mathcal{X})$ associated with a bounded kernel $k$, with its RKHS norm bounded by $B$. Then, with probability of at least $1 - \delta$, the following holds for all $x \in \mathcal{X}$ and $N \geq 1$:

\vspace{-10pt}
\small{
\begin{equation*}
    | \mu_{*} - h(x_{*}) | \leq \left(2B^2 + 300 \gamma_{N+1} \ln^3((N+1)/\delta)\right)^{0.5} \sigma_{*},
\end{equation*}} 

\vspace{-10pt}
\normalsize
\noindent where $\gamma_{N+1}$ is the maximum information gain after getting $N+1$ data points, and $\mu_{*}$, $\sigma^2_{*}$ are the mean and variance of the posterior GP given by \eqref{eq:gpposteriormu} and \eqref{eq:gpposteriorsigma}.

\end{lemma}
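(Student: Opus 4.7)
The plan is to follow the Chowdhury--Gopalan / Srinivas et al.\ style of concentration analysis for GP regression when the target function lies in an RKHS. First, I would decompose the measurement vector as $\mathbf{z} = \mathbf{h} + \boldsymbol{\epsilon}$, where $\mathbf{h} = (h(x_1),\ldots,h(x_N))^T$ and $\boldsymbol{\epsilon} = (\epsilon_1,\ldots,\epsilon_N)^T$, and correspondingly split the posterior mean \eqref{eq:gpposteriormu} into a noiseless kernel-ridge estimate $\mathbf{h}^T(K+\sigma_n^2 I)^{-1}K_*^T$ and a noise-driven correction $\boldsymbol{\epsilon}^T(K+\sigma_n^2 I)^{-1}K_*^T$. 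The pointwise error $\mu_* - h(x_*)$ then inherits this two-term structure, and I would bound each contribution separately before recombining via the triangle inequality and dividing by $\sigma_*$.

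For the deterministic (bias) piece, I would use the reproducing property $h(x_*) = \langle h, k(\cdot,x_*)\rangle_k$ together with the RKHS-norm bound $\norm{h}_k \leq B$. Viewing the noiseless ridge predictor as a regularized orthogonal projection in $\mathcal{H}_k(\mathcal{X})$ and applying Cauchy--Schwarz in that inner product, the pointwise bias is controlled by $B$ times the posterior standard deviation $\sigma_*$ of \eqref{eq:gpposteriorsigma}. This is what ultimately produces the $2B^2$ coefficient inside the square-root.

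For the stochastic piece, I would apply a self-normalized martingale concentration inequality (in the spirit of Abbasi-Yadkori et al.) to the linear functional $\boldsymbol{\epsilon}^T(K+\sigma_n^2 I)^{-1}K_*^T$, using the uniform sub-Gaussianity of $\{\epsilon_j\}$ guaranteed by the $\sigma_n$-boundedness assumption. The resulting high-probability bound is expressed in terms of $\log\det(I+\sigma_n^{-2}K)$, which by the standard variational characterization of mutual information between the latent function values and their noisy observations is exactly $2\gamma_N$. A discretization/covering argument over $x_* \in \mathcal{X}$ together with a union bound over $N\geq 1$ then upgrades the pointwise guarantee to the claimed uniform-in-$x_*$ and uniform-in-$N$ statement, producing the $\ln^3((N+1)/\delta)$ dependence and the constant $300$.

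The hard part will be the self-normalized martingale step combined with the covering/union-bound book-keeping: this is where the explicit constants, the $\gamma_{N+1}$ factor, and the cubic logarithmic dependence originate, and where a non-trivial amount of technical machinery is required. Since the lemma is quoted essentially verbatim from Theorem~6 of \cite{gpucb}, my actual proof in the paper would reduce to verifying that the present hypotheses---bounded kernel $k$, RKHS-norm bound $\norm{h}_k \leq B$, and zero-mean noise uniformly bounded by $\sigma_n$ (hence $\sigma_n$-sub-Gaussian)---match the hypotheses of that theorem, and then to invoking it directly without reproducing the concentration argument.
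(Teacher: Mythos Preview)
Your proposal is correct and matches the paper's approach exactly: the paper's proof is simply ``See \cite[Thm.~6]{gpucb},'' which is precisely what your final paragraph proposes after sketching the underlying Srinivas et al.\ argument. Your outline of the concentration proof (bias/variance split, RKHS Cauchy--Schwarz for the deterministic term, self-normalized martingale plus information-gain bound for the stochastic term) is accurate but goes beyond what the paper actually presents.
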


\begin{proof}
See \cite[Thm.~6]{gpucb}.
\end{proof}

In this lemma, the assumption about the boundedness of $\norm{h}_{k}$ implicitly requires a ``low complexity" of the target function \cite{chowdhury2017newgpucb}. $B$ is usually unknown a priori, but a trial-and-error approach to find its value suffices in practice \cite{gpucb}. $\gamma_{N+1}$ quantifies the reduction of uncertainty about $h$ in terms of entropy. It has a sublinear dependency on $N$ for many commonly used kernels and it can be efficiently approximated up to a constant \cite{gpucb}. 
\section{GP Regression for Affine Target Functions}
\label{sec:04compoundkernel}

In this section, we use GP regression to learn the mismatch term $\Delta(x, u)$ \eqref{eq:mismatchdef} from data.
From \eqref{eq:mismatchaffine}, we know that $\Delta(x, u)$ is affine in $u$. If we use an arbitrary kernel, we cannot exploit this information in the GP regression.
Therefore, our first objective is to construct an appropriate kernel that captures the control-affine structure of $\Delta$ in the regression. In order to do this, we introduce the general formulation of this problem in this section. Consider $p$ functions, $h_i:\mathcal{X} \rightarrow \R$ for $i = 1,\cdots,p$, and define
\vspace{-3pt}
\begin{equation}
    h_{c}(x, y) \coloneqq [h_1(x) \; h_2(x) \; \cdots \; h_p(x)]\cdot y,
\label{eq:h_c}
\vspace{-2pt}
\end{equation}
where $y\in\mathcal{Y}\subset\R^p$. Our objective is to estimate the function $h_{c}:\mathcal{X} \times \mathcal{Y} \rightarrow \R$ which is affine in $y$ by using GP regression, given its measurements $z_j = h_{c}(x_j, y_j) + \epsilon_j$ for $j = 1, \cdots, N$.

The underlying structure of $h_c (x, y)$ tells us that it contains information about $p$ random functions $\{h_i (x)\}_{i=1}^p$ condensed to a single scalar value by a dot product with $y$. Therefore, it is natural to consider $p$ underlying kernels and their composition. For $i=1, \cdots, p$, consider covariance functions $k_{i}:\mathcal{X} \times \mathcal{X} \rightarrow \R$.
\begin{definition} \label{def:adpkernel}
\emph{Affine Dot Product Compound Kernel: }
Define $k_{c}$ given by
\small
\begin{equation}
    k_{c}\left(\left[\begin{array}{c} x \\ y \\\end{array}\right], \left[\begin{array}{c} x' \\ y' \\\end{array}\right]\right) := y^T Diag([k_1(x, x'), \cdots, k_p(x, x')]) y',
\label{eq:adpkernel}
\end{equation}
\normalsize

\noindent as the \emph{Affine Dot Product} (ADP) compound kernel of $p$ individual kernels $k_1(x, x'),\ \cdots \ ,$ $k_p(x, x')$.
\end{definition}

Note that for a fixed $(x, x')$, the ADP compound kernel resembles the well-known dot product kernel, defined as $k(y, y') = y^T y'$ \cite{williams2006gaussian}.

\begin{lemma}
\label{lemma:adpkernelcharacter}
If $k_1, \cdots , k_p$ are positive definite kernels, the ADP compound kernel $k_c$ is also positive definite. Furthermore, if $k_1, \cdots , k_p$ are bounded kernels, $k_c$ is also bounded.
\end{lemma}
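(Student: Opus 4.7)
The plan is to handle the two claims separately and in order. For positive definiteness, I will fix an arbitrary finite collection of distinct points $\{(x_j, y_j)\}_{j=1}^{N} \subset \mathcal{X} \times \mathcal{Y}$ and examine the Gram matrix $K_c \in \R^{N \times N}$ whose $(j,l)$ entry is $k_c((x_j, y_j), (x_l, y_l))$. Expanding the definition in \eqref{eq:adpkernel} componentwise gives
\begin{equation*}
(K_c)_{jl} \;=\; \sum_{i=1}^{p} y_{j,i}\, k_i(x_j, x_l)\, y_{l,i},
\end{equation*}
where $y_{j,i}$ denotes the $i$-th component of $y_j$. The key observation is that this sum is exactly the $(j,l)$ entry of $\sum_{i=1}^{p} D_i K_i D_i$, where $K_i \in \R^{N \times N}$ is the Gram matrix of $k_i$ on $\{x_1,\ldots,x_N\}$ and $D_i = \diag(y_{1,i}, \ldots, y_{N,i})$.

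From this decomposition the positive-definite claim follows immediately: each $k_i$ being a positive-definite kernel means $K_i$ is positive semi-definite, so for any $v \in \R^N$ we have $v^\top D_i K_i D_i v = (D_i v)^\top K_i (D_i v) \ge 0$, hence each $D_i K_i D_i$ is PSD. Since the PSD cone is closed under sums, $K_c = \sum_i D_i K_i D_i \succeq 0$. As the collection of points was arbitrary, $k_c$ is a positive-definite kernel.

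For boundedness, I will use the standard Cauchy–Schwarz inequality for PSD kernels, $|k_c(z, z')|^2 \le k_c(z,z)\, k_c(z',z')$, which is available precisely because positive definiteness was just established. Evaluating on the diagonal yields
\begin{equation*}
k_c\!\left((x,y),(x,y)\right) \;=\; \sum_{i=1}^{p} y_i^{\,2}\, k_i(x,x).
\end{equation*}
If each $k_i$ satisfies $k_i(x,x) \le M_i$ for some constant $M_i$, and $\mathcal{Y} \subset \R^p$ is bounded (which is the relevant setting here since $\mathcal{Y}$ plays the role of the admissible input set $U$, assumed compact), so that $\|y\|^2 \le Y_{\max}^2$ for all $y \in \mathcal{Y}$, then the right-hand side is uniformly bounded by $Y_{\max}^2 \cdot \max_i M_i$. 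Combining with the Cauchy–Schwarz bound above gives a uniform bound on $|k_c|$, completing the proof.

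I do not anticipate any real obstacle: the decomposition $K_c = \sum_i D_i K_i D_i$ is the only step that needs a moment's thought, and it follows directly from matching indices in the definition. The only subtlety worth flagging in the write-up is that the boundedness conclusion implicitly uses boundedness of $\mathcal{Y}$, which is natural in the control-affine context of interest but should be stated explicitly.
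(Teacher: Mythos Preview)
Your proof is correct and rests on the same decomposition as the paper's, namely writing the Gram matrix of $k_c$ as a sum over $i$ of terms built from the Gram matrices $K_i$ and the $i$-th coordinates of the $y_j$. The paper expresses each term as the Hadamard product $(\mathbf{y}_i\mathbf{y}_i^{\tp}) \circ K_i$ (with $\mathbf{y}_i$ the vector of $i$-th coordinates) and then invokes the Schur Product Theorem to conclude PSD-ness; you write the very same term as the congruence $D_i K_i D_i$ with $D_i = \diag(\mathbf{y}_i)$ and conclude PSD-ness directly from $v^{\tp} D_i K_i D_i v = (D_i v)^{\tp} K_i (D_i v) \ge 0$. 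These are identical matrices---the identity $(\mathbf{y}_i\mathbf{y}_i^{\tp}) \circ K_i = D_i K_i D_i$ is standard---so the two arguments differ only in the tool used at the last step. Your route is arguably more elementary since it avoids citing the Schur Product Theorem.

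On boundedness, your argument via the kernel Cauchy--Schwarz inequality and the diagonal evaluation is more explicit than the paper's one-line claim, and you correctly flag that boundedness of $\mathcal{Y}$ is required. The paper's proof leaves this implicit (it just says ``each $K_i$ is bounded so $K_c$ is also bounded''), but in the application $\mathcal{Y}$ corresponds to $\{1\}\times U$ with $U$ compact, so the assumption is satisfied. Making it explicit, as you do, is an improvement.
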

\begin{proof}
Consider the Gram matrix of $k_c$, $K_c\in \R^{N\times N}$ for $\{(x_j, y_j)\}_{j=1}^{N}$. Let $K_i$ be the Gram matrix of $k_i$ for $\{x_j\}_{j=1}^{N}$. Define $Y := [y_1\; y_2\; \cdots\; y_N] \in \R^{p\times N}$, and let $\mathbf{y_{i}}^T$ be the $i$-th row of $Y$. Then,
\begin{equation*}
    K_c = \sum_{i=1}^{p}{\left(\mathbf{y_{i}}\;\mathbf{y_{i}}^T\right) \circ K_i },
\end{equation*}
where $\circ$ indicates the Hadamard product \cite{horn1990hadamard}. By the Schur Product Theorem \cite{horn1990hadamard}, if the $k_i$ are positive definite kernels, then since each $\mathbf{y_{i}}\mathbf{y_{i}}^T$ and $K_i$ are positive semidefinite, $K_c$ is a positive semidefinite matrix. Therefore, $k_c$ is a positive definite kernel by definition. Also, if the $k_i$ are bounded kernels, each $K_i$ is bounded so $K_c$ is also bounded. Therefore, $k_c$ is a bounded kernel.
\end{proof}

By Lemma \ref{lemma:adpkernelcharacter}, since $k_c$ is positive definite, it is a valid covariance function. Consider a set of functions $\mathcal{H}_{k_{c}}(\mathcal{X}\times \mathcal{Y}):=\{h_c:\mathcal{X}\!\times\!\mathcal{Y}\!\rightarrow\!\R \;|\; \exists h_i\!\in\!\mathcal{H}_{k_{i}} \;\text{for}\; i=1,\cdots,p,\;$s.t.$\;h_{c}(x, y) =  [h_1(x), \cdots, h_p(x)] \cdot y \}$ where each $\mathcal{H}_{k_{i}}$ is the RKHS whose reproducing kernel is $k_i$. Then, the following holds: 
\begin{theorem}
\label{th:RKHSforADPkernel}
$\mathcal{H}_{k_{c}}(\mathcal{X}\times \mathcal{Y})$ is an RKHS whose reproducing kernel is $k_c$ in Definition \ref{def:adpkernel}.
\begin{proof}
Define the inner product of $\mathcal{H}_{k_{c}}$ to be
\vspace{-5pt}
\begin{equation*}
    \langle h_c, h_{c}'\rangle_{c}:=\sum_{i=1}^p {\langle h_{i}, h_{i}'\rangle_{i}},
\end{equation*}
for $\forall h_c,h_{c}' \in \mathcal{H}_{k_{c}}$ where $\{h_i\}_{i=1}^p$ and $\{h_{i}'\}_{i=1}^p$ are sets of functions whose $i$-th elements are from $\mathcal{H}_{k_{i}}$ that satisfy $h_{c}(x, y) =  [h_1(x), \cdots, h_{p}(x)] \cdot y$ and $h_{c}'(x, y) =  [h_{1}'(x), \cdots, h_{p}'(x)] \cdot y$, respectively. Such sets of functions should exist by definition of $\mathcal{H}_{k_{c}}$. $\langle h_{i}, h_{i}'\rangle_{i}$ is the inner product of $\mathcal{H}_{k_{i}}$. It is trivial that this definition satisfies the axioms of the inner product. Then,

\vspace{-10pt}
\small
\begin{align*}
    \left\langle h_{c}(\cdot, \sq), k_{c}\left(\begin{bmatrix} \cdot \\ \sq \end{bmatrix}, \left[\begin{array}{c} x \\ y \\\end{array}\right]\right) \right\rangle_{c} 
    = & \sum_{i=1}^p y_{i} \langle h_{i}(\cdot), k_{i}(\cdot, x)\rangle_{i} \\
    = & \sum_{i=1}^p y_{i} h_{i}(x) = h_{c}(x, y).
\end{align*}
\normalsize

The first equality holds because of Definition \ref{def:adpkernel} and the definition of $\langle \cdot, \cdot \rangle_{c}$. The second equality holds because of the reproducing property of each $k_{i}(\cdot, \cdot)$.
\end{proof}
\end{theorem}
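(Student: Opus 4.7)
The plan is to establish the RKHS structure on $\mathcal{H}_{k_c}$ by (i) endowing it with a candidate inner product built from the component RKHSs, (ii) verifying the reproducing property of $k_c$ directly from Definition \ref{def:adpkernel}, and (iii) addressing the subtleties of well-definedness and completeness that a purely algebraic definition leaves open.

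First, I would set up the candidate inner product. Given $h_c, h_c' \in \mathcal{H}_{k_c}$, by the definition of the space there exist tuples $(h_1,\dots,h_p)$ and $(h_1',\dots,h_p')$ with $h_i, h_i' \in \mathcal{H}_{k_i}$ realizing the dot-product decomposition. Define $\langle h_c, h_c' \rangle_c := \sum_{i=1}^p \langle h_i, h_i' \rangle_i$. Symmetry and bilinearity follow from the same properties of each $\langle \cdot, \cdot \rangle_i$. Positive definiteness is also inherited: $\langle h_c, h_c \rangle_c = \sum_i \|h_i\|_i^2 \ge 0$, with equality forcing each $h_i \equiv 0$ and hence $h_c \equiv 0$.

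Next, I would verify the reproducing property. The kernel evaluated against the second slot $(x,y)$ can be rewritten in the component form as
\begin{equation*}
k_c\!\left(\begin{bmatrix}\cdot\\ \sq\end{bmatrix}, \begin{bmatrix}x\\ y\end{bmatrix}\right) = \sum_{i=1}^p \sq_i \bigl(y_i\, k_i(\cdot, x)\bigr),
\end{equation*}
so it lies in $\mathcal{H}_{k_c}$ with $i$-th component $y_i\, k_i(\cdot, x) \in \mathcal{H}_{k_i}$. Applying the definition of $\langle \cdot, \cdot \rangle_c$ and the reproducing property of each $k_i$ then yields $\langle h_c(\cdot,\sq), k_c([\cdot;\sq],[x;y]) \rangle_c = \sum_i \langle h_i, y_i\, k_i(\cdot,x) \rangle_i = \sum_i y_i h_i(x) = h_c(x,y)$, as required.

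The main obstacle is that the decomposition $h_c = \sum_i h_i(\cdot)\, y_i$ is in general not unique (for example, constants in $y$ can be reshuffled across components when $\mathcal{H}_{k_i}$ contain constants), so one must argue that $\langle \cdot, \cdot \rangle_c$ does not depend on the chosen representative. The cleanest route is to let $\mathcal{H}_0 \subset \mathcal{H}_{k_c}$ be the linear span of the kernel sections $\{k_c([\cdot;\sq],[x;y]) : (x,y)\in\mathcal{X}\times\mathcal{Y}\}$, check that the proposed inner product is well-defined and positive on $\mathcal{H}_0$ via the usual kernel-Gram-matrix computation (it reduces to $\sum_i$ of the Gram quadratic forms of each $k_i$, which is positive by Lemma \ref{lemma:adpkernelcharacter}), and then extend by completion; the Moore--Aronszajn style argument shows that any other decomposition differs by an element orthogonal to $\mathcal{H}_0$ whose contribution to inner products with kernel sections vanishes. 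Completeness then follows from the standard construction, so $\mathcal{H}_{k_c}$ equipped with $\langle \cdot, \cdot \rangle_c$ is the (unique) RKHS with reproducing kernel $k_c$.
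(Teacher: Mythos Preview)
Your proposal follows the same core strategy as the paper: define $\langle h_c, h_c'\rangle_c := \sum_{i=1}^p \langle h_i, h_i'\rangle_i$ and then verify the reproducing property of $k_c$ via the componentwise reproducing property of the $k_i$. The computation you give for the reproducing property is identical to the paper's.

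Where you differ is in rigor. The paper simply asserts that ``it is trivial that this definition satisfies the axioms of the inner product'' and does not address either (a) whether $\langle\cdot,\cdot\rangle_c$ is independent of the chosen decomposition $(h_1,\dots,h_p)$, or (b) completeness of the resulting space. You correctly flag (a) as the main obstacle and sketch a Moore--Aronszajn style resolution via the span of kernel sections; you also note (b). These are genuine gaps in the paper's argument that you fill. One remark: in the paper's intended application $y=[1,u^T]^T$ with $u$ ranging over a set $U$ with nonempty interior, so $\mathcal{Y}$ spans $\R^p$ and the decomposition is in fact unique, which makes (a) automatic there; your more general treatment is nonetheless the right thing to do for the theorem as stated.
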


Theorem \ref{th:RKHSforADPkernel} allows us to apply the UCB analysis from Section \ref{sec:03gp} to $h_{c}(x, y)$ with some additional conditions which will be specified in Section \ref{sec:05optimization}. Regression for $h_{c}(x, y)$ in Lemma \ref{lemma:UCB} (i.e. $\mu_{*},\sigma_{*}$) now can be treated in the same way as any other kind of general GP regression, but with a specific choice of covariance function given by \eqref{eq:adpkernel}.

One caveat of this regression is that depending on the distribution of the inputs $y_j$ in the data, this problem can be underdetermined. For instance, when every $y_j$ is a constant vector, there are infinitely many choices of valid $h_i(x)$ that give the same estimation error. Nevertheless, under our GP regression structure, this evidence of underdetermination is implicitly captured by larger values of the variance of the posterior. In practice, it is preferable to avoid such underdetermination since we want to reduce the uncertainty of the GP posterior. Therefore, we need to carefully collect the training data to make sure we capture rich enough information about the target function. In Section \ref{sec:06algorithm} we propose a method for this purpose. In the system identification literature, this is related to the property of persistency of excitation \cite{verhaegen2007filtering}.

Finally, the main benefit of exploiting the affine structure in the kernel is revealed in the expressions for the posterior distribution's mean and variance. This is the main difference in how we use the ADP kernel compared to \cite{umlauft2017feedback}, where a similar kernel is proposed for a special case $p=2$. Let $X\in\R^{n\times N},\;Y\in \R^{p \times N}$ be matrices whose column vectors are the inputs $x_j$ and $y_j$ of the collected data, respectively, and let $\mathbf{z}\in \R^N$ be the vector containing the output measurements $z_j$.
Then, plugging them and the ADP compound kernel into \eqref{eq:gpposteriormu} and \eqref{eq:gpposteriorsigma} gives the following expressions for the mean and variance of the posterior at a query point $(x_{*}, y_{*})$:
\vspace{-3pt}
\begin{equation}
\label{eq:mu_adp}
    \mu_{*} = \underbrace{\mathbf{z}^T (K_c + \sigma_n^2 I )^{-1} K_{*Y}^{T}}_{=:b_{*}^T} y_{*},
\end{equation}
\begin{equation}
\label{eq:sigma_adp}
\small{
    \sigma_{*}^{2} = y_{*}^{T}\!\underbrace{\left(Diag \text{\footnotesize $\left(\begin{bmatrix}k_1(x_{*}, x_{*}) \\ \vdots \\ k_p(x_{*}, x_{*}) \end{bmatrix}\right)$} -K_{*Y}  (K_c + \sigma_n^2 I )^{-1} K_{*Y}^{T} \right)}_{=:C_{*}}\!y_{*}.
}
\end{equation}
Here, $K_c\in\R^{N\times N}$ is the Gram matrix of $k_c$ for the training data inputs ($X,Y$), and $K_{*Y}\in\R^{p\times N}$ is given by
\begin{equation*}
\small{
    K_{*Y} = \begin{bmatrix} K_{1*} \\ K_{2*} \\ \vdots \\K_{p*}
    \end{bmatrix} \circ Y,\; K_{i*}=[k_i(x_{*}, x_1),\ \cdots \ , k_i(x_{*}, x_N)].
}
\end{equation*}

Readers can observe that \eqref{eq:mu_adp} and \eqref{eq:sigma_adp} are affine and quadratic in $y_{*}$, respectively. These structures are critical when formulating the uncertainty-aware CLF chance constraint as a second-order cone constraint in the next section.
\section{Uncertainty-Aware Min-norm Stabilizing Controller}
\label{sec:05optimization}
\subsection{Probabilistic Bounds on the CLF Derivative}
We have already presented all the necessary tools to verify the probabilistic bounds on the mismatch term $\Delta(x,u)$ in \eqref{eq:mismatchaffine}. 
Indeed, learning $\Delta$ corresponds to the GP regression problem defined by \eqref{eq:mu_adp}, \eqref{eq:sigma_adp}, in which the target function $h_c$ is $\Delta$, $x$ is the state, $y=[1 , u^T ]^T $, $p=m+1$, $h_1$ is $\Delta_1$, and $h_{i+1}$ is $\Delta_2$'s $i$-th element for $i=1,\cdots,m$.

\begin{assumption}
\label{assum:RKHSnorm} Consider bounded reproducing kernels $k_i$ for $i=1,\cdots,m\!+\!1$. We assume that $\Delta_1$ is a member of $\mathcal{H}_{k_{1}}$ and each $i$-th element of $\Delta_2$ is a member of $\mathcal{H}_{k_{i+1}}$ for $i=1,\cdots,m$, respectively. We assume that their RKHS norms are bounded.

\end{assumption}
\begin{lemma}
\label{lemma:bounded-kernels} Under Assumption \ref{assum:RKHSnorm} and with a compact set of admissible control inputs $U$, $\Delta$ is a member of $\mathcal{H}_{k_{c}}$, the RKHS created by the ADP compound kernel of $k_i$ for $i=1,\cdots,m\!+\!1$. Moreover, its RKHS norm is bounded, namely $\norm{\Delta}_{k_c} \le B$. 
\end{lemma}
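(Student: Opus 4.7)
The plan is to exhibit $\Delta$ directly as a member of $\mathcal{H}_{k_c}$ using the control-affine decomposition already observed in \eqref{eq:mismatchaffine}, and then bound its RKHS norm through the inner product constructed in the proof of Theorem \ref{th:RKHSforADPkernel}, with finiteness supplied by Assumption \ref{assum:RKHSnorm}.

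First I would match the general ADP setup of Section \ref{sec:04compoundkernel} to the present problem: set $p = m+1$, use the lifted input $y = [1,\, u^T]^T \in \R^{m+1}$, and take $h_1 := \Delta_1$ together with $h_{i+1} := (\Delta_2)_i$ for $i = 1, \ldots, m$, where $(\Delta_2)_i$ denotes the $i$-th scalar entry of the row vector $\Delta_2(x)$. With this identification, $\Delta(x,u) = \sum_{i=1}^{m+1} h_i(x)\, y_i$ fits the ADP form \eqref{eq:h_c} exactly. Since Assumption \ref{assum:RKHSnorm} places each $h_i$ in the associated $\mathcal{H}_{k_i}$, the definition of $\mathcal{H}_{k_c}$ immediately yields $\Delta \in \mathcal{H}_{k_c}$ (viewed as a function on $\mathcal{X} \times \mathcal{Y}$ via the lifting). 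For the norm I would evaluate the inner product $\langle \cdot , \cdot \rangle_c$ introduced in the proof of Theorem \ref{th:RKHSforADPkernel} on this specific decomposition, obtaining
\begin{equation*}
\norm{\Delta}_{k_c}^2 \;=\; \norm{\Delta_1}_{k_1}^2 + \sum_{i=1}^m \norm{(\Delta_2)_i}_{k_{i+1}}^2,
\end{equation*}
which is finite by Assumption \ref{assum:RKHSnorm}. Defining $B$ as any upper bound for the square root of this sum then concludes the argument.

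The main subtlety I anticipate is that the ADP decomposition of an element of $\mathcal{H}_{k_c}$ is in general not unique: distinct tuples $(h_1, \ldots, h_p)$ can produce the same scalar function $h_c$, so the inner product from Theorem \ref{th:RKHSforADPkernel} must be interpreted consistently (e.g.\ as the infimum of $\sum_i \norm{h_i}_{k_i}^2$ over admissible decompositions). This does not obstruct the current claim, since the concrete decomposition above already furnishes a finite value of that sum, which is an upper bound on $\norm{\Delta}_{k_c}^2$. The compactness of $U$ required in the hypothesis plays a separate and complementary role: combined with Lemma \ref{lemma:adpkernelcharacter} and the boundedness of each $k_i$, it ensures that $k_c$ remains a bounded kernel on the restricted domain $\mathcal{X} \times \mathcal{Y}$ with $\mathcal{Y} = \{1\} \times U$, which is precisely what the downstream application of the UCB bound of Lemma \ref{lemma:UCB} will require.
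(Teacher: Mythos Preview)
Your proposal is correct and follows essentially the same route as the paper's own proof, which is the one-line remark that the claim ``follows from Thm.~\ref{th:RKHSforADPkernel} and the definition of the inner product for $\mathcal{H}_{k_{c}}$ in the proof of Thm.~\ref{th:RKHSforADPkernel}.'' Your write-up simply unpacks that sentence: the identification $p=m+1$, $y=[1,u^T]^T$, $h_1=\Delta_1$, $h_{i+1}=(\Delta_2)_i$ together with Assumption~\ref{assum:RKHSnorm} gives membership in $\mathcal{H}_{k_c}$, and the inner-product formula from the proof of Theorem~\ref{th:RKHSforADPkernel} yields the displayed norm identity and hence boundedness; your additional remarks on non-uniqueness of the decomposition and on the role of compactness of $U$ for boundedness of $k_c$ are accurate elaborations that the paper leaves implicit.
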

\begin{proof}
The proof follows from Thm. \ref{th:RKHSforADPkernel} and the definition of the inner product for $\mathcal{H}_{k_{c}}$ in the proof of Thm. \ref{th:RKHSforADPkernel}.
\end{proof}
\begin{assumption}
\label{assum:epsilon} We have access to measurements $z_i = \dot{V}(x_i,u_i) - (L_{\tilde{f}}V(x) + L_{\tilde{g}}V(x) u_i) + \epsilon_i$, and the noise term $\epsilon_i$ is zero-mean and uniformly bounded by $\sigma_{n}$.
\end{assumption}

With Assumptions \ref{assum:RKHSnorm}, \ref{assum:epsilon} and Lemma \ref{lemma:bounded-kernels}, we can now apply Lemma \ref{lemma:UCB} to our regression problem.

\begin{theorem}
\label{theorem:DeltaUCB}
Let Assumptions \ref{assum:RKHSnorm} and \ref{assum:epsilon} hold. Let $\beta \coloneqq \left(2B^2 + 300 \gamma_{N+1} \ln^3((N+1)/\delta)\right)^{0.5}$, with $N$ the number of data points, and $\gamma_{N+1}$ as defined in Lemma \ref{lemma:UCB}. Let $\mu_{*}$ and $\sigma^2_{*}$ be the mean and variance of the posterior for $\Delta$ using the ADP compound kernel, at a query point $(x_{*},u_{*})$ as obtained from \eqref{eq:mu_adp} and \eqref{eq:sigma_adp}. Then, with a probability of at least $1-\delta$ the following holds:
\begin{equation}
\label{eq:sigmaUCB}
    | \mu_{*} - \Delta(x_{*},u_{*}) | \leq \beta \sigma_{*}.
\end{equation}
\end{theorem}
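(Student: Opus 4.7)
The plan is to show that Theorem \ref{theorem:DeltaUCB} follows as a direct application of Lemma \ref{lemma:UCB} to the GP regression problem for $\Delta$ under the ADP compound kernel $k_c$. Essentially, the work has already been set up by Theorem \ref{th:RKHSforADPkernel}, Lemma \ref{lemma:adpkernelcharacter}, and Lemma \ref{lemma:bounded-kernels}; what remains is a careful verification that every hypothesis of Lemma \ref{lemma:UCB} is satisfied in our setting, and then an invocation of that lemma with the appropriate identification of the target function, input domain, and kernel.

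First, I would identify the regression problem. The target function is $h_c = \Delta \colon \R^n \times U \to \R$, with input $x_* \in \R^n$ and $y_* = [1,\, u_*^T]^T$, i.e., the combined input domain is $\mathcal{X} \times \mathcal{Y}$ with $\mathcal{Y} = \{1\} \times U$. The covariance function is the ADP compound kernel $k_c$ built from the $m+1$ individual kernels $k_1, \ldots, k_{m+1}$. With this identification, $\mu_*$ and $\sigma_*^2$ in the statement of the theorem coincide exactly with the posterior mean and variance in Lemma \ref{lemma:UCB}, via the expressions \eqref{eq:mu_adp}, \eqref{eq:sigma_adp}.

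Next, I would verify the three hypotheses of Lemma \ref{lemma:UCB}. (i) The noise sequence is zero-mean and uniformly bounded by $\sigma_n$ by Assumption \ref{assum:epsilon}. (ii) The kernel $k_c$ is bounded: by Assumption \ref{assum:RKHSnorm} each $k_i$ is bounded, and by Lemma \ref{lemma:adpkernelcharacter} the ADP compound kernel inherits boundedness from its constituents (noting that $y_*$ ranges over a compact set because $U$ is compact). (iii) The target function $\Delta$ lies in $\mathcal{H}_{k_c}$ with bounded RKHS norm $\norm{\Delta}_{k_c} \le B$; this is exactly the content of Lemma \ref{lemma:bounded-kernels}, which in turn relies on Theorem \ref{th:RKHSforADPkernel} and Assumption \ref{assum:RKHSnorm}.

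With all three hypotheses in place, I would invoke Lemma \ref{lemma:UCB} directly at the query point $(x_*, y_*)$ to conclude that, with probability at least $1-\delta$,
\begin{equation*}
    |\mu_* - \Delta(x_*, u_*)| \le \left(2B^2 + 300\,\gamma_{N+1}\ln^3((N+1)/\delta)\right)^{0.5} \sigma_* = \beta\,\sigma_*,
\end{equation*}
which is precisely \eqref{eq:sigmaUCB}. There is no real technical obstacle since the heavy lifting has already been done in the preceding results; the only subtlety worth stating explicitly is that the compactness of $U$ is what guarantees boundedness of $k_c$ over the effective input domain $\mathcal{X} \times \mathcal{Y}$, so the hypothesis ``bounded kernel'' in Lemma \ref{lemma:UCB} is legitimately satisfied rather than only pointwise on each factor.
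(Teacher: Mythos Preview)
Your proposal is correct and follows essentially the same route as the paper: the paper's own proof is the single line ``Proof follows from Lemmas \ref{lemma:UCB} and \ref{lemma:bounded-kernels},'' and you have simply unpacked this by explicitly checking each hypothesis of Lemma \ref{lemma:UCB} (noise via Assumption \ref{assum:epsilon}, bounded kernel via Lemma \ref{lemma:adpkernelcharacter}, bounded RKHS norm via Lemma \ref{lemma:bounded-kernels}). The only addition you make is the remark about compactness of $U$ being needed for boundedness of $k_c$ on the effective domain, which is a fair clarification the paper leaves implicit.
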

\begin{proof}
Proof follows from Lemmas \ref{lemma:UCB} and \ref{lemma:bounded-kernels}.
\end{proof}
\vspace{-3pt}
The error in the estimation of the mismatch term $\Delta$ is now bounded for some confidence level.
From \eqref{eq:sigmaUCB} we can easily derive the bounds on the true derivative of the CLF for a probability of at least $1-\delta$:

\vspace{-5pt}
\small
\begin{equation}
\label{eq:V_UCB}
    \tilde{\dot{V}}(x_*,u_*) + \mu_{*} - \beta \sigma_{*} \leq \dot{V}(x_*,u_*) \leq \tilde{\dot{V}}(x_*,u_*) + \mu_{*} + \beta \sigma_{*}.
\end{equation}
\normalsize

\subsection{GP-Based CLF Second-Order Cone Program}

Taking the upper bound of \eqref{eq:V_UCB}, we can enforce the exponential CLF constraint of \eqref{eq:eclf} with a probability of at least $1-\delta$, and incorporate the resulting chance constraint into a min-norm optimization problem that defines a feedback control law 
$u^* \colon \R^n \to \R^m$ pointwise:

\small

\HRule
\noindent \textbf{GP-CLF-SOCP}:
\begin{align}
u^{*}(x) & = & & \underset{u \in U,\ d \in \R}{\argmin} \quad u^T u + p~d^2 \label{eq:gp-clf-socp} \\
\text{s.t.} & \; & & \tilde{\dot{V}}(x,u) + \mu_{*}(x,u) + \beta \sigma_{*}(x,u) + \lambda V(x) \leq d \nonumber.
\end{align}
\HRule

\normalsize

\vspace{-3pt}
\noindent With a slight abuse of notation, $\mu_{*}(x,u)$ and $\sigma_{*}(x,u)$ are the mean and standard deviation of the GP posterior at $(x,u)$, obtained from \eqref{eq:mu_adp} and \eqref{eq:sigma_adp}.

\begin{remark}
The stability constraint is relaxed in order to guarantee the feasibility of the problem. If the initial state $x_0$ is outside the CLF maximum sublevel set for exponential stability $\Omega_{c_{exp}}$, we cannot guarantee exponential convergence and neither can the controller which uses the true plant dynamics. However, even for this case, we still do guarantee that the approximation error of the CLF derivative is bounded as given by \eqref{eq:V_UCB} with probability $1-\delta$. 
\end{remark}

Note that this optimization problem does not require knowledge about the true plant dynamics. The fact that $\mu_{*}$ and $\sigma^2_{*}$ are affine and quadratic in $u$, respectively, is crucial for the following main result of the paper:

\begin{theorem}
\label{theorem:SOCP}
Using the proposed ADP compound kernel from Definition \ref{def:adpkernel}, the uncertainty-aware optimization problem \eqref{eq:gp-clf-socp} is convex, meaning that its global minimum can be reliably recovered. Specifically, it is a Second-Order Cone Program (SOCP).
\end{theorem}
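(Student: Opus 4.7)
My plan is to rewrite the stability chance constraint as a standard second-order cone (SOC) constraint in $(u, d)$, handle the convex quadratic objective via the standard epigraph reformulation, and then conclude that together with the linear input constraints the whole problem is an SOCP, and therefore convex. First I would dissect the left-hand side of the chance constraint term by term. The nominal Lie derivative $\tilde{\dot{V}}(x, u) = L_{\tilde{f}} V(x) + L_{\tilde{g}} V(x)\, u$ is affine in $u$; $\lambda V(x)$ is constant at a fixed query state; and by \eqref{eq:mu_adp}, $\mu_{*}(x, u) = b_{*}^T y$ with $y = [1, u^T]^T$, so $\mu_{*}$ is affine in $u$ as well. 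Moving $\beta\,\sigma_{*}(x, u)$ to the opposite side, the constraint reads $\beta\,\sigma_{*}(x, u) \le d - \tilde{\dot{V}}(x, u) - \mu_{*}(x, u) - \lambda V(x)$, whose right-hand side is affine in the decision variables $(u, d)$.

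The key step is then to show that $\sigma_{*}(x, u)$ is the Euclidean norm of an affine function of $u$. From \eqref{eq:sigma_adp}, $\sigma_{*}^2(x, u) = y^T C_{*}\, y$, and by Lemma \ref{lemma:adpkernelcharacter} the ADP compound kernel $k_c$ is positive definite, so the posterior covariance matrix $C_{*}$ is positive semi-definite. Hence $C_{*} = L L^T$ for some matrix $L$, which gives $\sigma_{*}(x, u) = \norm{L^T y}_2 = \norm{L^T [1, u^T]^T}_2$. Plugging this back into the constraint puts it in the canonical SOC form $\norm{A u + b}_2 \le c^T u + \alpha\, d + e$ for data-dependent $A$, $b$, $c$, $\alpha$, $e$.

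Finally, the objective $u^T u + p\, d^2$ is a convex quadratic and can be minimized via the standard epigraph trick: introduce a scalar $t$, minimize $t$, and impose $u^T u + p\, d^2 \le t$, which is a rotated second-order cone constraint and admits a standard equivalent SOC form. Since the admissible input set $U$ is assumed to be cut out by linear constraints, the resulting problem in $(u, d, t)$ is a standard SOCP, hence convex, and its global minimum can be reliably recovered by off-the-shelf solvers. The main subtlety I expect is justifying the positive semi-definiteness of $C_{*}$ cleanly; rather than reading it off the formula in \eqref{eq:sigma_adp}, I would ground it in the positive definiteness of $k_c$ from Lemma \ref{lemma:adpkernelcharacter}, since this is what guarantees that $\sigma_{*}$ is genuinely a Euclidean norm of an affine map in $u$ and not merely the square root of an indefinite quadratic form.
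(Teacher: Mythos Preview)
Your proposal is correct and follows essentially the same route as the paper's own proof: both arguments identify that $\tilde{\dot{V}}$, $\mu_{*}$, and $\lambda V$ are affine in $u$, factor the posterior variance $y^{T}C_{*}y$ through a square root of $C_{*}$ to write $\sigma_{*}$ as the norm of an affine map, and handle the quadratic objective via an epigraph/SOC reformulation. The only notable difference is that you explicitly justify $C_{*}\succeq 0$ via the positive definiteness of $k_{c}$ from Lemma~\ref{lemma:adpkernelcharacter}, whereas the paper simply asserts the norm rewriting without comment; this extra care is sound and welcome but does not change the overall strategy.
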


\begin{proof}
Let's first transform the quadratic objective function into a second-order cone constraint and a linear objective.
Let the objective function be $J(u,d) \coloneqq u^T u + p~d^2$. Note that by taking $\phi = [u^T,d]^T$ we can express the objective as $J(\phi) = \phi^T Q \phi$. Next, by setting $z \coloneqq L \phi$, where $L$ is the matrix square-root of $Q$, we can rewrite $J(z) = \norm{z}_2^2$. Note that minimizing $J$ gives the same result as minimizing $J'(z) \coloneqq \norm{z}_2$. Now we can move the objective function $J'$ into a second-order cone constraint by setting $\norm{z}_2 \leq t$ and minimizing the new linear objective function $J''(t) \coloneqq t$.

The next step is to prove that the CLF chance constraint is a second-order cone constraint. Note that $\tilde{\dot{V}}(x, u)=L_{\tilde{f}}V(x) + L_{\tilde{g}}V(x) u$ and $\mu_{*}(x,u) = b_{*}^T [1, u^T]^T$ are both control-affine. Note that $\sigma_{*}(x,u) = \sqrt{[1, u^T]C_{*}(x)[1, u^T]^T}$ can be rewritten as $\sigma_{*}(x,u) = \norm{M(x) u + n(x)}_2$, although we omit the expressions of $M$ and $n$ for conciseness. Therefore, the CLF chance constraint is a second-order cone constraint, and the resulting optimization problem is an SOCP with two second-order cone constraints corresponding to the original objective function and the CLF chance constraint. SOCPs are inherently convex.
\end{proof}
\section{Data Collection}
\label{sec:06algorithm}

In this section, we introduce an algorithm that efficiently collects measurements of $\Delta$ for the GP regression.
This data should contain rich enough information about $\Delta$, especially about its dependency on $u$ as discussed in Section \ref{sec:04compoundkernel}, and since our goal is to obtain a locally stabilizing controller, it is preferable to exclude the data from outside the RoA for efficiency. To this end, we propose an algorithm that iteratively collects new data and trains a new GP model in an episodic learning fashion. The algorithm uses the level sets of the CLF as ``guides" for expanding the training region by exploiting Lemma \ref{th:clf}. In addition, we use the idea of greedy search in the Bayesian Optimization literature \cite{gpucb} to actively explore the most uncertain area of the training region. Our algorithm is based on the active learning algorithm of \cite{berkenkamp2017saferl}, although while \cite{berkenkamp2017saferl} focuses on guaranteeing safety online, our objective is to maximize the efficiency of the offline data collection.

\subsection{Discrete-Time Measurements}
First consider how to obtain inputs $(x_j, u_j)$ and labels ($z_j$) ---measurements of $\Delta(x_j, u_j)$--- of the training data. Let $x(t)$ and $u(t)$ be the state and control input measurements at time $t$ and $x(t+\small{\Delta}t)$ be the state measurement at the next timestep. We can use these values to create input-label pairs with $\mathcal{O}(\small{\Delta}t^2)$ approximation error:

\vspace{-5pt}
\small
\begin{align*}
    & x_j = \frac{x(t+\small{\Delta}t) + x(t)}{2}, \quad u_j=u(t), \\ 
    & z_j = \frac{V(x(t+\small{\Delta}t))-V(x(t))}{\small{\Delta}t} - \tilde{\dot{V}}(x_j, u_j).
\end{align*}

\normalsize

\noindent Note that $u_j$ is the control input during the interval $[t, t+\small{\Delta}t)$, and $z_j$ is the difference between the value of $\dot{V}(x_j, u_j)$ obtained from numerical differentiation and the nominal model-based estimate.

\begin{figure*}
\begin{center}
\includegraphics[width=\textwidth]{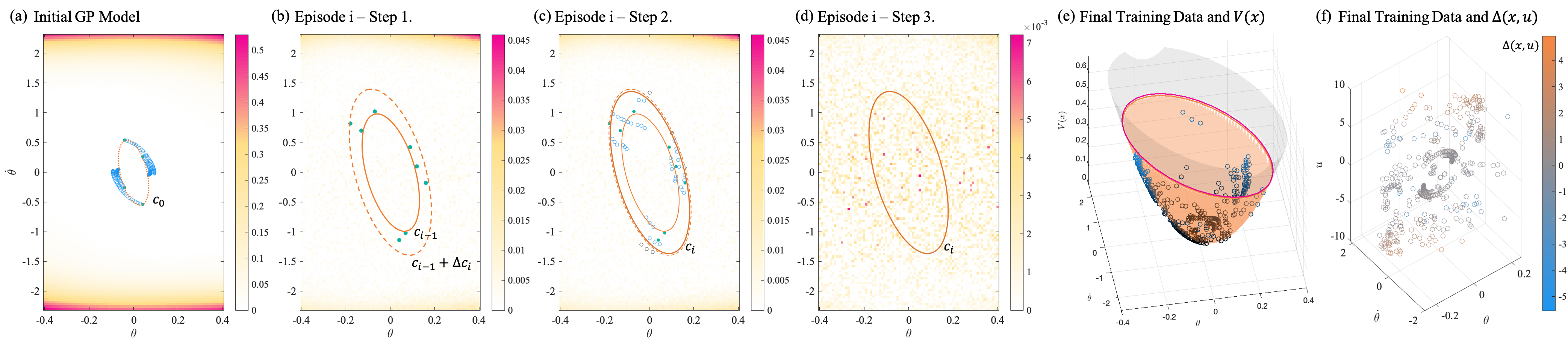}
\end{center}
\vspace{-1em}
\caption{\textbf{(a--d): Visualization of the episodic learning data collection algorithm running on the inverted pendulum example:} Color map represents the maximum variance of the posterior GP, $\max_{u\in U}\Delta_* (x, u)$. Orange curves: level curves of the CLF. Green points: initial states for the rollouts, Blue points: trajectory points added to the training data. Grey points: trajectory points excluded from the training data since they are outside $\Omega_{c_{i}}$. \textbf{(a) Initial GP Model: }Trajectories sampled from the initial level set $\Omega_{c_0}$ by running the CLF-QP are collected to create an initial GP model. \textbf{(b) Episode $i$-Step 1: }$N_e$ initial states and initial control inputs in $(\Omega_{ (c_{i-1} + \Delta c_i)} \setminus \Omega_{c_{i-1}} ) \times \mathcal{U}$ are determined where $\sigma_*$ are maximal. \textbf{(c) Episode $i$-Step 2: }Simulations are run from such initial points and the resulting trajectories are saved. At the same time, $c_i$ is determined by evaluating \eqref{eq:learnedLyapCert} for the sampled trajectories. \textbf{(d) Episode $i$-Step 3: }Finally, the $i$-th GP model is updated. Note the reduction in the variance. (Total episodes = 7, $i=3$ for (b), (c), (d).)
\textbf{(e, f): Distribution of the final training data }plotted in the $x$--$V(x)$ space (blue points) and plotted in $x$--$u$ space, respectively. (e) Level curve in color magenta is the $\Omega_{c_{max}}$ (maximum level set contained in the RoA) for the true plant. The value of CLF is plotted in grey and the orange region is the region verified as RoA through the data collection algorithm. (f) The color indicates the value of $z_i$, the measurement of $\Delta(x_{i}, u_{i})$. The number of data points is 425.}\label{fig:algorithm}
\vspace{-1em}
\end{figure*}

\subsection{Estimation of the Region of Attraction}

Next, we introduce a new certificate with the learned uncertainty for a conservative estimation of the RoA. Notice that the condition for inclusion in the RoA provided by Lemma \ref{th:clf}, is only valid when there is no model-plant mismatch. Thus, we have to incorporate the learned uncertainty terms from Section \ref{sec:04compoundkernel} as we do when we formulate the GP-CLF-SOCP in Section \ref{sec:05optimization}. 

\begin{theorem}
\label{theorem:learnedLyapCert}
Taking the GP posterior distribution from the training data $\{(x_j, u_j, z_j)\}_{j=1}^{N}$, and $\beta$ from \eqref{eq:sigmaUCB}, if there exists a $c > 0$ such that for all $x\in \Omega_{c} $ it holds that

\begin{equation}
\label{eq:learnedLyapCert}
    \inf_{u\in U}\ \tilde{\dot{V}}(x,u) + \mu_{*}(x,u) + \beta \sigma_{*}(x,u) < 0,
\end{equation}
then $\Omega_{c} $ is in the RoA with probability at least $(1-\delta)$.
\end{theorem}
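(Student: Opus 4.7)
The plan is to reduce this statement to a direct application of Lemma~\ref{th:clf} after using Theorem~\ref{theorem:DeltaUCB} to replace the unknown true $\dot{V}$ by its learned probabilistic upper bound. The hypothesis \eqref{eq:learnedLyapCert} is essentially condition \eqref{eq:clf_local_def} of Lemma~\ref{th:clf} but written in terms of the nominal model plus the learned mean and UCB term, so the only real work is showing that, on a single high-probability event, the UCB-based inequality implies the true-plant CLF inequality uniformly over $\Omega_c$.

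First, I would invoke Theorem~\ref{theorem:DeltaUCB}, which guarantees that with probability at least $1-\delta$, the event
\begin{equation*}
\mathcal{E} := \setB{\,|\mu_{*}(x,u) - \Delta(x,u)| \leq \beta\, \sigma_{*}(x,u) \;\; \forall (x,u) \in \R^n \times U\,}
\end{equation*}
holds. Note that this probabilistic bound is uniform in $(x,u)$, which is essential for what follows. On $\mathcal{E}$, the upper half of inequality \eqref{eq:V_UCB} gives, for every $(x,u) \in \R^n \times U$,
\begin{equation*}
\dot{V}(x,u) \;=\; \tilde{\dot{V}}(x,u) + \Delta(x,u) \;\leq\; \tilde{\dot{V}}(x,u) + \mu_{*}(x,u) + \beta\, \sigma_{*}(x,u).
\end{equation*}

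Next, I would combine this with the hypothesis \eqref{eq:learnedLyapCert}. On the event $\mathcal{E}$, for every $x \in \Omega_c \setminus \set{0}$ and any $u \in U$ achieving the infimum (or approaching it) in \eqref{eq:learnedLyapCert},
\begin{equation*}
\inf_{u\in U}\ \dot{V}(x,u) \;\leq\; \inf_{u\in U}\ \tilde{\dot{V}}(x,u) + \mu_{*}(x,u) + \beta\, \sigma_{*}(x,u) \;<\; 0.
\end{equation*}
Thus, on $\mathcal{E}$, the true plant \eqref{eq:system} with the CLF $V$ satisfies condition \eqref{eq:clf_local_def} of Lemma~\ref{th:clf} on the sublevel set $\Omega_c$.

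Finally, I would apply Lemma~\ref{th:clf} directly: since \eqref{eq:clf_local_def} holds throughout $\Omega_c \setminus \set{0}$ for the true plant on $\mathcal{E}$, the system is locally stabilizable and $\Omega_c$ is a control-invariant subset of the RoA of the origin. Since $\mathcal{E}$ occurs with probability at least $1-\delta$, the same conclusion about $\Omega_c$ holds with probability at least $1-\delta$, which is exactly the claim. The main obstacle I anticipate is making sure the UCB bound of Theorem~\ref{theorem:DeltaUCB} is being applied uniformly over $(x,u)$ on a single event (so that Lemma~\ref{th:clf}'s pointwise-in-$x$ hypothesis can be met simultaneously for all $x \in \Omega_c$), rather than pointwise with a fresh $\delta$ for each query; the statement of Lemma~\ref{lemma:UCB} as cited already provides this uniform version, so the argument goes through cleanly.
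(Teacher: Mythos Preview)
Your proposal is correct and follows exactly the same approach as the paper, which simply states that the proof follows from Lemma~\ref{th:clf} and Theorem~\ref{theorem:DeltaUCB}. You have merely unpacked the one-line argument in detail (including the helpful remark about uniformity of the UCB bound), but the logical route is identical.
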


\begin{proof}
Proof follows from Lemma \ref{th:clf} and Theorem \ref{theorem:DeltaUCB}.
\end{proof}

Notice that this certificate is ``conservative" in the sense that it takes the worst-case bound of the effect of the uncertainty term, based on the collected data. Therefore, if we collect more data and improve our GP model to have less uncertainty, then the conservatism will reduce and we will be able to obtain a bigger subset of the RoA. This is the central principle of the algorithm.

\subsection{Algorithm Overview}

Finally, we give an overview of the proposed algorithm. 
\subsubsection{Initial GP Model}
We start by considering a level set $\Omega_{c_0} $ which is small-enough to be a subset of the RoA (Fig \ref{fig:algorithm}.a). Such $c_0>0$ always exists due to our assumption that $V$ is a locally valid CLF.
We collect an initial batch of training data ($D_0$) from a set of trajectories whose initial states are randomly sampled from $\Omega_{c_0} $, and train an initial GP regression model. Here, we use the nominal model-based CLF-QP from \eqref{eq:clf-qp-all} as our stabilizing controller.

\subsubsection{Episodic Learning}
The main loop of our algorithm consists of a series of episodes, and each $i$-th episode is mainly composed of three steps. \textbf{1)} In the first step (Fig. \ref{fig:algorithm}.b), we obtain a set of $N_e$ points from $(\Omega_{ (c_{i-1} + \small{\Delta} c_i)} \setminus \Omega_{c_{i-1}} ) \times \mathcal{U}$ at which the variance of the posterior of the current GP model is maximal.
$\small{\Delta} c_i$ is the parameter that determines the size of the new exploration region. \textbf{2)} Next (Fig \ref{fig:algorithm}.c), we run short rollouts by taking each point from Step 1 as our initial state and initial control input. During the rollouts, we also evaluate the stabilizability condition \eqref{eq:learnedLyapCert} at each timestep. Note that such evaluation is a feasibility problem which is also an SOCP since \eqref{eq:learnedLyapCert} is a second-order cone constraint. After the rollouts, we expand the level of $V$ (we determine $c_{i}$) up to a point for which \eqref{eq:learnedLyapCert} becomes infeasible.
\textbf{3)} Finally (Fig \ref{fig:algorithm}.d), we add the data obtained from the trajectories within $\Omega_{c_i}$ to our data set, and train the next GP regression model.

\begin{remark}
In Step 2 of an episode, we check condition \eqref{eq:learnedLyapCert} only for finite sampled states in $\Omega_{c_i}  \setminus \Omega_{c_{i-1}}$, whereas Theorem \ref{theorem:learnedLyapCert} requires \eqref{eq:learnedLyapCert} to be satisfied at every state in $\Omega_{c_i}$. Notice that brute-force verification for the whole region of $\Omega_{c_i} $ will scale poorly with state dimension. Even though we do not have the rigorous guarantee of Theorem \ref{theorem:learnedLyapCert} with this algorithm, the error in the estimated $c_{max}$ does not affect the probabilistic guarantee of the resulting GP-CLF-SOCP controller.
In practice, we observe that we can well approximate $c_{max}$ such that $\Omega_{c_{max}}$ is contained in the true RoA (See Fig. \ref{fig:algorithm}(e)).
\end{remark}

\section{Examples}
\label{sec:07results}

\subsection{Two-dimensional System: Inverted Pendulum}
\label{subsec:invertedpendulum}
Consider a control-affine two-dimensional inverted pendulum as the one in \cite{berkenkamp2017saferl}, with parameters of the plant $m_{\textrm{plant}}\!=\!2$kg, $l\!=\!1$m and for the model, $m_{\textrm{model}}\!=\!1$kg, $l\!=\!1$m, which results in model uncertainty in both $f$ and $g$ in \eqref{eq:system}.

A CLF-QP controller \eqref{eq:clf-qp-all} based on the nominal model is designed to stabilize the pendulum to the upright position.
In order to illustrate the effects of model uncertainty, we compare it with the CLF-QP controller based on the true plant dynamics. The difference between the two controllers (Fig.~\ref{fig:results-ip-case1}) is due to the effects of model uncertainty. Specifically, in this case the model uncertainty makes the system converge more slowly.

Fig.~\ref{fig:algorithm} depicts the data collection algorithm and the resulting training data for the GP model. The results of deploying the GP-CLF-SOCP controller, with a confidence level of $1\!-\!\delta\!=\!0.95$, are presented in Fig.~\ref{fig:results-ip-case1} in blue lines.
Note that the results are very similar to those from the CLF-QP based on the true plant dynamics, which means that the GP-CLF-SOCP successfully captures the correct effects of model uncertainty. Also, the computation time of the GP-CLF-SOCP, including the GP inference time, is $9.1\pm2.2$ms (max: $25.7$ms) on a laptop with a 10th-gen Intel Core i7 and 32GB RAM.

In order to benchmark the GP-CLF-SOCP, we compare its performance with the one obtained if we only learn the uncertainty in $f$, as done in previous works \cite{fan2019balsa, zheng2020learning}. For this, we design a GP-based Control Lyapunov Function Quadratic Program (GP-CLF-QP) that only learns the uncertainty $\Delta_1$ in \eqref{eq:mismatchaffine}. The results of this controller are also shown in Fig.~\ref{fig:results-ip-case1}.

\subsection{System with Multiple Control Inputs: Kinematic Bicycle}
\label{subsec:kinematicbicycle}

Next, in order to show that our method can be successfully applied to systems with higher state dimension and multiple control inputs, we apply it to track a reference trajectory using a kinematic bicycle model. The state is defined as $x = [p_x , p_y , v , \theta , \gamma]^T$ ($p_x, p_y$: position coordinates, $v$: speed, $\theta$: heading angle, $\gamma$: tangent of the steering angle). The dynamics of the system are given as
\vspace{-1em}

\small

\begin{equation}
    \dot{x} = f(x) + g(x)u, \quad f(x)\! =\! \begin{bmatrix} v \cos{\theta} \\ v \sin{\theta} \\ -f_{\mu} \\ v \gamma \\ 0 \end{bmatrix},\
    g(x)\! =\! \begin{bmatrix} 0 & 0 \\ 0 & 0 \\ b_v &  0 \\ 0 & 0 \\ 0 & b_{\gamma} \end{bmatrix},
\end{equation}

\normalsize

\noindent where $u\in\R^2$, and $f_\mu$, $b_v$, $b_\gamma$ are constants that emulate friction and skid effects. For the nominal model, we assume no such effects ($f_\mu\!=\!0$, $b_v\!=\!b_\gamma\!=\!1$) and for the plant, we use $f_\mu\!=\!1$, $b_v\!=\!1.5$, $b_\gamma\!=\!0.75$. The objective is to stabilize to a constant-velocity trajectory along the x-axis; $v(t)\!=\!5$, $p_y(t)\!=\!\theta(t)\!=\!\gamma(t)\!=\!0$. The initial state is set as $x_0 = [0,0.25,2,0.25,0.25]^T$.

\begin{figure}
\centering
\includegraphics[width=0.45\textwidth]{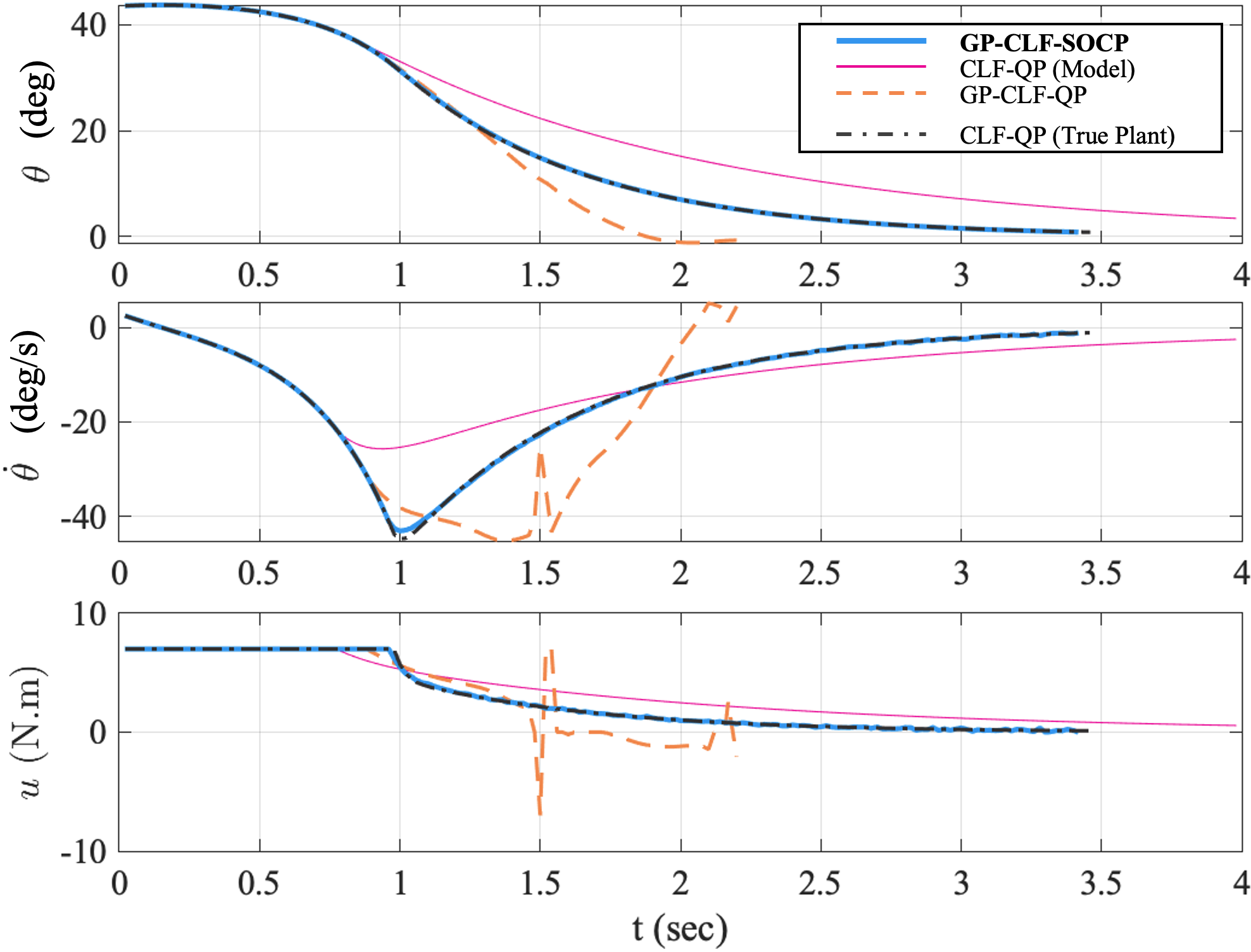}
\caption{Simulation results of applying the GP-CLF-SOCP to the inverted pendulum example, with a model-plant mismatch of $m_{\textrm{plant}}\!=$2kg, $m_{\textrm{model}}\!=$1kg, compared to the nominal-model-based CLF-QP, and to the GP-CLF-QP that does not consider the uncertainty affected by $u$. Results of the CLF-QP based on the true plant are also provided to show that the GP-CLF-SOCP learns the correct exponential CLF constraint.
}\label{fig:results-ip-case1}
\vspace{4pt}
\end{figure}

Fig. \ref{fig:results-bicycle} shows the simulation results of the GP-CLF-SOCP and those of the CLF-QP based on the nominal model and the true plant. Here, we use a polynomial CLF \cite{clfdemonstration}, which is verified to be locally stabilizing for the nominal model. While the nominal model-based CLF-QP oscillates around the reference trajectory, the GP-CLF-SOCP successfully converges to the reference trajectory. 

\section{Conclusion}
We have presented a method to design a stabilizing controller for control-affine systems with both state and input-dependent model uncertainty using GP regression. For this purpose, we have proposed the novel ADP compound kernel, which captures the control-affine nature of the problem. This permits the formulation of the so-called GP-CLF-SOCP, which is solved online to obtain an exponentially stabilizing controller with probabilistic guarantees. After testing it on the numerical simulation of two different systems, we obtain a clear improvement with respect to the nominal model-based CLF-QP and we are able to closely match the performance of the true plant-based controller. 

\begin{figure}
\centering
\includegraphics[width=0.45\textwidth]{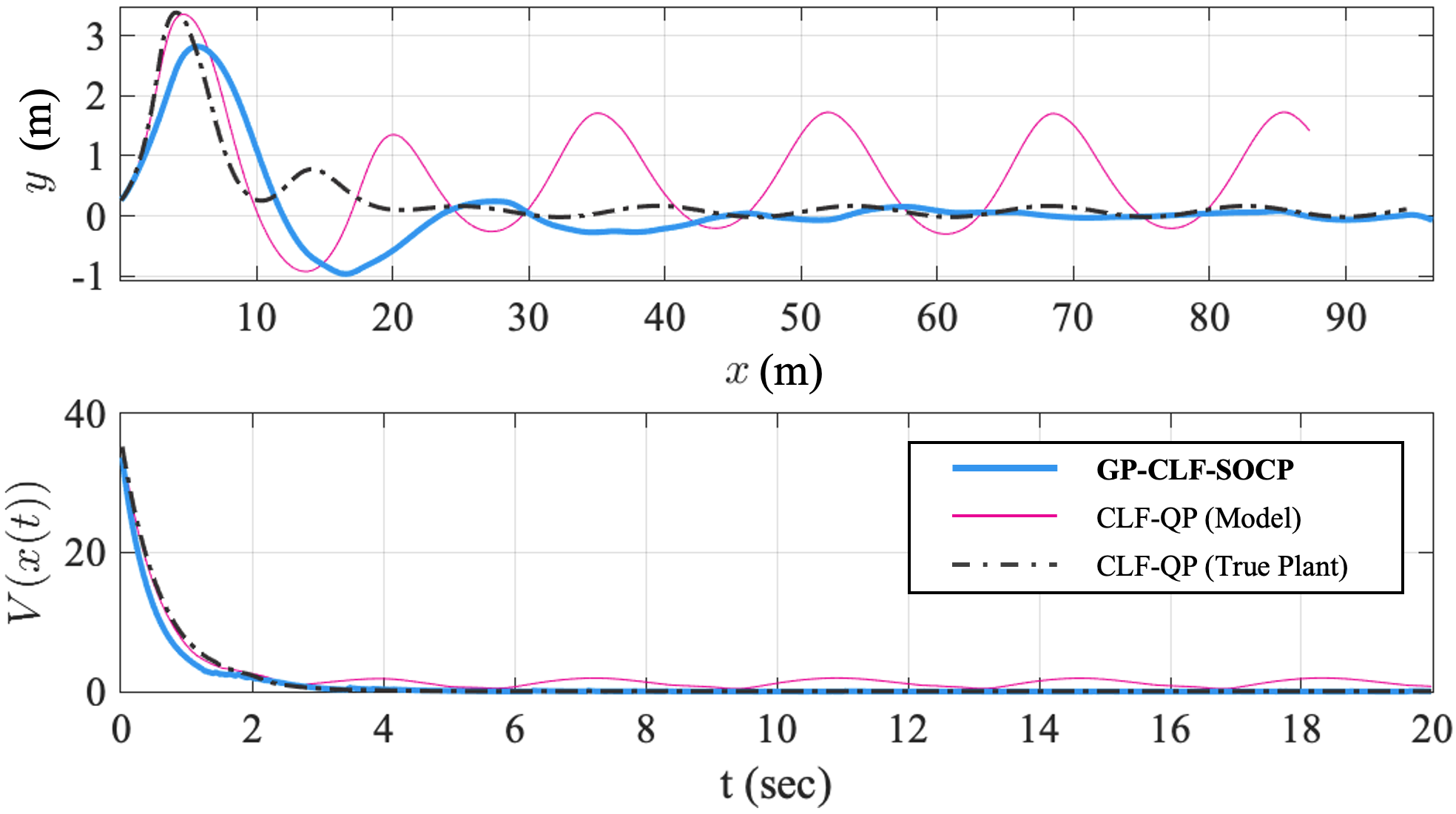}
\caption{Trajectories in $x-y$ plane (Top) and histories of $V(x(t))$ (Bottom) of the kinematic vehicle under artificial drift and friction to illustrate the applicability of the GP-CLF-SOCP to multi-input systems. Comparison between GP-CLF-SOCP, CLF-QP(Model), and CLF-QP(Plant). The sampling time is set as 20ms, and the comptutation time of the GP-CLF-SOCP per timestep is $10.3\pm1.9$ms (max: $20.4$ms). Number of training data points for GP-CLF-SOCP: 961.
}
\label{fig:results-bicycle}
\vspace{-10pt}
\end{figure}

\label{sec:08conclusion}

\addtolength{\textheight}{-12cm}   




\section*{Acknowledgment}
We would like to express our gratitude to Alonso Marco for his insightful comments.

\balance
\bibliographystyle{IEEEtran}
\bibliography{reference.bib}{}

\begin{thebibliography}{10}
\providecommand{\url}[1]{#1}
\csname url@samestyle\endcsname
\providecommand{\newblock}{\relax}
\providecommand{\bibinfo}[2]{#2}
\providecommand{\BIBentrySTDinterwordspacing}{\spaceskip=0pt\relax}
\providecommand{\BIBentryALTinterwordstretchfactor}{4}
\providecommand{\BIBentryALTinterwordspacing}{\spaceskip=\fontdimen2\font plus
\BIBentryALTinterwordstretchfactor\fontdimen3\font minus
  \fontdimen4\font\relax}
\providecommand{\BIBforeignlanguage}[2]{{%
\expandafter\ifx\csname l@#1\endcsname\relax
\typeout{** WARNING: IEEEtran.bst: No hyphenation pattern has been}%
\typeout{** loaded for the language `#1'. Using the pattern for}%
\typeout{** the default language instead.}%
\else
\language=\csname l@#1\endcsname
\fi
#2}}
\providecommand{\BIBdecl}{\relax}
\BIBdecl

\bibitem{artstein}
Z.~Artstein, ``Stabilization with relaxed controls,'' \emph{Nonlinear Analysis:
  Theory, Methods and Applications}, vol.~7, pp. 1163 -- 1173, 1983.

\bibitem{sontag}
E.~D. Sontag, ``A ‘universal’ construction of artstein's theorem on
  nonlinear stabilization,'' \emph{Systems and Control Letters}, vol.~13,
  no.~2, pp. 117 -- 123, 1989.

\bibitem{galloway2015clfqp}
K.~{Galloway}, K.~{Sreenath}, A.~D. {Ames}, and J.~W. {Grizzle}, ``Torque
  saturation in bipedal robotic walking through control lyapunov function-based
  quadratic programs,'' \emph{IEEE Access}, vol.~3, pp. 323--332, 2015.

\bibitem{nguyen2015optimal}
Q.~Nguyen and K.~Sreenath, ``Optimal robust control for bipedal robots through
  control lyapunov function based quadratic programs.'' in \emph{Robotics:
  Science and Systems}.\hskip 1em plus 0.5em minus 0.4em\relax Rome, Italy,
  2015.

\bibitem{reher2019inverseCLF}
J.~{Reher}, C.~{Kann}, and A.~D. {Ames}, ``An inverse dynamics approach to
  control lyapunov functions,'' in \emph{American Control Conference}, 2020.

\bibitem{ames2013clfqp}
A.~D. Ames and M.~Powell, ``Towards the unification of locomotion and
  manipulation through control lyapunov functions and quadratic programs,'' in
  \emph{Control of Cyber-Physical Systems}.\hskip 1em plus 0.5em minus
  0.4em\relax Springer, 2013, pp. 219--240.

\bibitem{nguyen2015l1adaptive}
Q.~Nguyen and K.~Sreenath, ``L1 adaptive control for bipedal robots with
  control lyapunov function based quadratic programs,'' in \emph{American
  Control Conference}, Chicago, IL, July 2015, pp. 862--867.

\bibitem{taylor2019clflearning}
A.~J. {Taylor}, V.~D. {Dorobantu}, H.~M. {Le}, Y.~{Yue}, and A.~D. {Ames},
  ``Episodic learning with control lyapunov functions for uncertain robotic
  systems,'' in \emph{IEEE/RSJ International Conference on Intelligent Robots
  and Systems}, 2019, pp. 6878--6884.

\bibitem{choi2020reinforcement}
J.~Choi, F.~Castañeda, C.~Tomlin, and K.~Sreenath, ``{Reinforcement learning
  for safety-critical control under model uncertainty, using control lyapunov
  functions and control barrier functions},'' in \emph{Robotics: Science and
  Systems}, Corvalis, OR, July 2020.

\bibitem{westenbroek2020learning}
T.~{Westenbroek}, F.~{Castañeda}, A.~{Agrawal}, S.~S. {Sastry}, and
  K.~{Sreenath}, ``Learning min-norm stabilizing control laws for systems with
  unknown dynamics,'' in \emph{IEEE Conference on Decision and Control}, 2020,
  pp. 737--744.

\bibitem{berkenkamp2016lyapunov}
F.~{Berkenkamp}, R.~{Moriconi}, A.~P. {Schoellig}, and A.~{Krause}, ``Safe
  learning of regions of attraction for uncertain, nonlinear systems with
  gaussian processes,'' in \emph{IEEE Conference on Decision and Control},
  2016, pp. 4661--4666.

\bibitem{umlauft2018clf}
J.~{Umlauft}, L.~{Pöhler}, and S.~{Hirche}, ``An uncertainty-based control
  lyapunov approach for control-affine systems modeled by gaussian process,''
  \emph{IEEE Control Systems Letters}, vol.~2, pp. 483--488, 2018.

\bibitem{fan2019balsa}
D.~D. {Fan}, J.~{Nguyen}, R.~{Thakker}, N.~{Alatur}, A.~a.~{Agha-mohammadi},
  and E.~A. {Theodorou}, ``Bayesian learning-based adaptive control for safety
  critical systems,'' in \emph{IEEE International Conference on Robotics and
  Automation}, 2020, pp. 4093--4099.

\bibitem{cheng2020safe}
R.~Cheng, M.~J. Khojasteh, A.~D. Ames, and J.~W. Burdick, ``Safe multi-agent
  interaction through robust control barrier functions with learned
  uncertainties,'' \emph{arXiv preprint arXiv:2004.05273}, 2020.

\bibitem{zheng2020learning}
L.~{Zheng}, R.~{Yang}, J.~{Pan}, H.~{Cheng}, and H.~{Hu}, ``Learning-based
  safety-stability-driven control for safety-critical systems under model
  uncertainties,'' in \emph{International Conference on Wireless Communications
  and Signal Processing}, 2020, pp. 1112--1118.

\bibitem{khojasteh2020probabilistic}
M.~J. Khojasteh, V.~Dhiman, M.~Franceschetti, and N.~Atanasov, ``Probabilistic
  safety constraints for learned high relative degree system dynamics,'' in
  \emph{Learning for Dynamics and Control}, 2020, pp. 781--792.

\bibitem{amescbf}
A.~D. {Ames}, X.~{Xu}, J.~W. {Grizzle}, and P.~{Tabuada}, ``Control barrier
  function based quadratic programs for safety critical systems,'' \emph{IEEE
  Transactions on Automatic Control}, vol.~62, pp. 3861--3876, 2017.

\bibitem{umlauft2017feedback}
J.~{Umlauft}, T.~{Beckers}, M.~{Kimmel}, and S.~{Hirche}, ``Feedback
  linearization using gaussian processes,'' in \emph{IEEE Conference on
  Decision and Control}, 2017, pp. 5249--5255.

\bibitem{Lin95control-lyapunovuniversal}
Y.~Lin and E.~D. Sontag, ``Control-lyapunov universal formulas for restricted
  inputs,'' \emph{Control-Theory and Advanced Technology}, vol.~10, pp.
  1981--2004, 1995.

\bibitem{ames2014rapidly}
A.~D. Ames, K.~Galloway, K.~Sreenath, and J.~W. Grizzle, ``Rapidly
  exponentially stabilizing control lyapunov functions and hybrid zero
  dynamics,'' \emph{IEEE Transactions on Automatic Control}, vol.~59, no.~4,
  pp. 876--891, 2014.

\bibitem{wendland2004scattered}
H.~Wendland, \emph{Scattered data approximation}.\hskip 1em plus 0.5em minus
  0.4em\relax Cambridge university press, 2004, vol.~17.

\bibitem{williams2006gaussian}
C.~K. Williams and C.~E. Rasmussen, \emph{Gaussian processes for machine
  learning}.\hskip 1em plus 0.5em minus 0.4em\relax MIT press, Cambridge, MA,
  2006, vol.~2, no.~3.

\bibitem{gpucb}
N.~Srinivas, A.~Krause, S.~Kakade, and M.~Seeger, ``Gaussian process
  optimization in the bandit setting: No regret and experimental design,'' in
  \emph{Proceedings of the 27th International Conference on Machine
  Learning}.\hskip 1em plus 0.5em minus 0.4em\relax Madison, WI, USA:
  Omnipress, 2010, p. 1015–1022.

\bibitem{chowdhury2017newgpucb}
S.~R. Chowdhury and A.~Gopalan, ``On kernelized multi-armed bandits,'' in
  \emph{Proceedings of the 34th International Conference on Machine Learning -
  Volume 70}, 2017, p. 844–853.

\bibitem{horn1990hadamard}
R.~A. Horn, ``The hadamard product,'' in \emph{Proceedings of Symposia in
  Applied Mathematics}, vol.~40, 1990, pp. 87--169.

\bibitem{verhaegen2007filtering}
M.~Verhaegen and V.~Verdult, \emph{Filtering and system identification: a least
  squares approach}.\hskip 1em plus 0.5em minus 0.4em\relax Cambridge
  university press, 2007.

\bibitem{berkenkamp2017saferl}
F.~Berkenkamp, M.~Turchetta, A.~Schoellig, and A.~Krause, ``Safe model-based
  reinforcement learning with stability guarantees,'' in \emph{Advances in
  Neural Information Processing Systems}.\hskip 1em plus 0.5em minus
  0.4em\relax Curran Associates, Inc., 2017, vol.~30, pp. 908--918.

\bibitem{clfdemonstration}
H.~Ravanbakhsh and S.~Sankaranarayanan, ``Learning control lyapunov functions
  from counterexamples and demonstrations,'' \emph{Autonomous Robots}, vol.~43,
  no.~2, p. 275–307, Feb. 2019.

\end{thebibliography}
\end{document}